\let\theoremstyle\undefined
\DeclareMathAlphabet{\mathpzc}{OT1}{pzc}{m}{it}
\definecolor{mymagenta}{rgb}{0.937, 0.004, 0.584}
\definecolor{myorange}{rgb}{0.965, 0.529, 0.255}
\definecolor{myblue}{rgb}{0.106, 0.588, 0.953}
\definecolor{mypurple}{rgb}{0.627, 0.125, 0.941}
\newtheoremstyle{spaced}  
  {0.5em}                  
  {0.5em}                  
  {\normalfont}              
  {}                      
  {\bfseries}             
  {.}                     
  {1em}                   
  {}                      
\theoremstyle{spaced} 
\newcommand{\R}{\mathbb{R}}
\newtheorem{proposition}{Proposition}
\newtheorem{definition}{Definition}
\newtheorem{corollary}{Corollary}
\newtheorem{lemma}{Lemma}
\newtheorem{theorem}{Theorem} 
\newtheorem{remark}{Remark}
\newtheorem{assumption}{Assumption} 
\newtheorem{condition}{Condition}
\newcommand{\interior}[1]{\mathrm{Int}(#1)}
\newcommand{\norm}[1]{\left\lVert #1\right\rVert}
\newcommand{\arxiv}[1]{\href{http://arxiv.org/abs/#1}{\normalsize{\textsc{arxiv:}}~\nolinkurl{#1}}}
\renewcommand{\epsilon}{\varepsilon}
\renewcommand{\phi}{\varphi}
\newcommand{\N}{\mathbb{N}}
\newcommand{\X}{X}
\newcommand{\traj}{\mathrm{x}}
\newcommand{\BWtraj}{\mathrm{x}^-}
\newcommand{\jcnote}[1]{{\textcolor{black}{#1}}}
\newcommand{\dLee}[1]{{\color{black} #1}}
\newcommand{\shnote}[1]%
    {\textcolor{red}{ #1}}
\newcommand{\cbf}{h_S}
\newcommand{\dcbf}{\frac{\partial \cbf}{\partial x}}
\newcommand{\strategy}{\xi_{d}}
\newcommand{\Strategy}{\Xi_{d}}
\newcommand{\BWstrategy}{\xi_{d}^-}
\newcommand{\BWStrategy}{\Xi_{d}^-}
\newcommand{\UU}{\mathcal{U}}
\newcommand{\ctrl}{\mathrm{u}}
\newcommand{\BWctrl}{\mathrm{u}^-}
\newcommand{\dstb}{\mathrm{d}}
\newcommand{\BWdstb}{\mathrm{d}^-}
\newcommand{\cset}{U}
\newcommand{\cfset}{\UU}
\newcommand{\BWcfset}{\UU^-}
\newcommand{\dset}{D}
\newcommand{\dfset}{\mathcal{D}}
\newcommand{\BWdfset}{\mathcal{D}^-}
\newcommand{\cbfconstraint}{barrier constraint}
\newcommand{\discount}{\gamma}
\newcommand{\VV}{V_{\discount}}
\newcommand{\JJ}{J_{\discount}}
\newcommand{\frt}[1]{\mathrm{FRT}(#1)}
\newcommand{\minfrt}[1]{\mathrm{FRT}(#1)}
\newcommand{\vk}[1]{\mathrm{VK}^{-}(#1)}
\newcommand{\brt}[1]{\mathrm{BRT}(#1)}
\newcommand{\newparallel}{\mathrel{\mathpalette\new@parallel\relax}}
\newcommand{\new@parallel}[2]{%
  \begingroup
  \sbox\z@{$#1T$}
  \resizebox{!}{\ht\z@}{\raisebox{\depth}{$\m@th#1/\mkern-5mu/$}}%
  \endgroup
}
\begin{document}

\begin{frontmatter}
  \title{A Forward Reachability Perspective on Control Barrier Functions and Discount Factors in Reachability Analysis}

  \thanks[footnoteinfo]{This work is supported by DARPA Assured Autonomy Grant No. FA8750-18-C-0101, the ONR BRC for Multibody Control Systems No. N000141812214, the NASA ULI Grant No. 80NSSC20M0163, and the NSF
Grant CMMI-1931853. Any opinions, findings, and conclusions expressed in this material are those of the authors and do not necessarily reflect the views of any aforementioned organizations. \\ \hspace*{12pt}* The first two authors contributed equally to this work.}

  \author[UCLA]{Jason J. Choi$^{*}$}\ead{jjhchoi@ucla.edu},
  \author[NC]{Donggun Lee$^{*}$}\ead{donggun\_lee@ncsu.edu},
  \author[UCSD]{Boyang Li},
  \author[MIT]{Jonathan P. How},
  \author[UC]{Koushil Sreenath},
  \author[UCSD]{Sylvia L. Herbert},
  \author[UC]{Claire J. Tomlin}

  \address[UCLA]{University of California, Los Angeles, CA}
  \address[NC]{North Carolina State University, Raleigh, NC}
  \address[UCSD]{University of California, San Diego, CA}
  \address[MIT]{Massachusetts Institute of Technology, MA}
  \address[UC]{University of California, Berkeley, CA}

\begin{keyword}
Reachability Analysis, Control Invariance, Control Barrier Functions
\end{keyword}

\begin{abstract}
Control invariant sets are crucial for various methods that aim to design safe control policies for systems whose state constraints must be satisfied over an indefinite time horizon. In this article, we explore the connections among reachability, control invariance, and Control Barrier Functions (CBFs). Unlike prior formulations based on backward reachability concepts, we establish a strong link between these three concepts by examining the inevitable Forward Reachable Tube (FRT), which is the set of states such that every trajectory reaching the FRT must have passed through a given initial set of states. \jcnote{First, our findings show that the inevitable FRT is a robust control invariant set if it has a continuously differentiable boundary. If the boundary is not differentiable, the FRT may lose invariance. We also show that any robust control invariant set including the initial set is a superset of the FRT if the boundary of the invariant set is differentiable.} Next, we formulate a differential game between the control and disturbance, where the inevitable FRT is characterized by the zero-superlevel set of the value function. By incorporating a discount factor in the cost function of the game, the barrier constraint of the CBF naturally arises in the Hamilton-Jacobi (HJ) equation and determines the optimal policy. \jcnote{The resulting FRT value function serves as a CBF-like function, and conversely, any valid CBF is also a forward reachability value function. We further prove that any $C^1$ supersolution of the HJ equation for the FRT value functions is a valid CBF and characterizes a robust control invariant set that outer-approximates the FRT. Building on this property, finally, we devise a novel method that learns neural control barrier functions, which learn an control invariant superset of the FRT of a given initial set.}
\end{abstract}
\end{frontmatter}


\section{\jcnote{Introduction}}

Safety guarantees are essential for control design in many applications. In this article, we focus on safety problems that can be represented by ensuring that system states satisfy specific constraints over an indefinite time horizon. An effective strategy for ensuring state trajectories stay within the desired constraint region involves identifying a subset where the trajectory can remain for an infinite duration. Sets exhibiting these properties are referred to as \textit{control invariant sets} \cite{blanchini1999set}, and are key to various methods for designing safe control policies \cite{Wabersich2023}. The theoretical analysis of control invariance offers valuable insights for the development of safe control policies. 

Control invariant sets can be computed using set-theoretic and optimization-based approaches. For example, convex and scalable methods for computing robust control invariant sets of nonlinear systems have been proposed in~\cite{fiacchini2010computation, schafer2023scalable,comelli2024inner,brown2023computing}. These approaches offer computational advantages and scalability, but typically rely on restrictive assumptions such as convexity, or specific set parameterizations. 

In contrast, approaches that characterize control invariant sets using a scalar function---referred to as the barrier certificate~\cite{Prajna2006}, do not explicitly require such assumptions. This concept has evolved into the notion of a \textit{control barrier function} (CBF)~\cite{Ames2016}, which mandates that the function satisfy a particular differential inequality condition. By enforcing this condition, control policies ensure not just the safety at the boundary of the set, but also a smooth deceleration of trajectories as they approach the boundary. We refer to this condition as the \textit{\cbfconstraint}. Although these concepts provide an effective framework for safety verification and control, designing CBFs remains a nontrivial problem, as the feasibility of the barrier constraint must be ensured everywhere inside the safe set. This challenge has motivated significant recent efforts toward systematic methods for designing or learning CBFs \cite{dawson2023safe}.

This paper focuses on reachability analysis, which is closely related to robust CBFs, namely control invariance under adversarial disturbances~\cite{cardaliaguet1996differential, xue2018reach}. As summarized in Table~\ref{table:comparision_CBF_Reachability}, many reachability-based methods for robust control invariance rely on backward reachability~\cite{Mitchell2005, Fisac2018, akametalu2018minimum, xue2018reach, xue2019robust, choi2021robust}, which characterizes the set of initial states that can remain in a given set despite disturbances. However, these approaches do not directly yield CBFs. Enforcing the barrier condition modifies the associated Hamilton–Jacobi partial differential equation (HJ-PDE), which may result in unbounded or discontinuous value functions and multiple solutions. Consequently, numerical solutions may not necessarily correspond to the robust control invariant set.


Unlike prior reachability methods, our approach leverages forward reachability, which characterizes the states reachable from a given set of initial conditions. The key insight is that this perspective leads to a formulation whose value function naturally satisfies the requirements of control barrier functions, remaining bounded and continuous while enforcing the barrier condition.

To formalize this idea, we extend the notion of the minimal forward reachable tube (FRT)~\cite{mitchell2007comparing} to the inevitable FRT, which represents the set of states that are inevitably reached from the initial set despite worst-case disturbances. For brevity, we use the term “FRT” to denote this inevitable FRT unless specified otherwise. The FRT is characterized through an HJ formulation, which we call the HJ forward reachable tube variational inequality (HJ-FRT-VI).

Next, we show that the FRT is robustly control invariant if its boundary is continuously differentiable. Otherwise, robust control invariance may not hold because the barrier constraint is satisfied almost everywhere on the boundary but not everywhere. Moreover, the viscosity solution of the proposed HJ-FRT-VI may contain non-differentiable points and therefore cannot directly serve as a CBF, even though it satisfies the barrier condition almost everywhere in the state space.

To address this issue, we prove that any $C^1$ continuous supersolution of the proposed HJ-FRT-VI is a valid CBF and characterizes a robust control invariant set that outer-approximates the FRT. Based on this result, we further develop a learning-based numerical method to compute such supersolutions as CBFs.

We highlight the following two main theoretical contributions.
\begin{enumerate}
    \item This paper establishes that CBFs can be systematically constructed using forward reachability, in contrast to existing backward-reachability-based approaches~\cite{Mitchell2005, Fisac2018, akametalu2018minimum, xue2018reach, choi2021robust} (see Table 1). The proposed formulation yields a value function that satisfies the barrier constraint while remaining bounded, continuous, and uniquely defined as the solution of the proposed HJ-FRT-VI.
    
    \item This paper establishes a novel connection between supersolutions of the HJ-FRT-VI and control barrier functions. While supersolutions are a classical concept in the HJ-PDE literature \cite{evans2010partial}, their control-theoretic interpretation in the context of safety and control invariance has not been previously characterized. This paper proves that any $C^1$ supersolution of the proposed HJ-FRT-VI is a valid CBF and characterizes a robust control invariant set that outer-approximates the FRT. This result assigns a concrete safety and invariance meaning to supersolutions and enables the systematic construction of differentiable CBFs. 
\end{enumerate}

The rest of the article is organized as follows. In Section~\ref{sec:probDef}, we review the concepts of control invariance, CBFs, and backward reachability analysis for systems with disturbances. In Section~\ref{sec:FRT_RCI}, we introduce forward reachable tubes and present their relevance to robust control invariance. In Section~\ref{sec:FRT_CBF}, we detail the Hamilton-Jacobi formulation of the FRTs and establish a connection to CBFs. In Section~\ref{sec:frt-cbf-learning}, we propose a method for learning CBFs developed upon the theoretical properties in Sections~\ref{sec:FRT_RCI} and \ref{sec:FRT_CBF}.
Section~\ref{sec:discussion} provides a discussion on how our work is distinguished from the reachability formulations in prior work. We conclude the article with closing remarks in Section~\ref{sec:conclusion}.

\textit{Notation:} $\norm{\cdot}$ indicates the $l^2$ norm in the Euclidean space. For two same dimensional vectors $a$ and $b$,  $a \cdot b$ denotes the inner product. 
For a set $A$, $\interior{A}$, $\partial A$, $A^c$ denote the interior, the boundary, and the complement of $A$, respectively. For a point $x \in \R^n$ and $r > 0$, we define $B_r(x)$ as the hyperball centered at $x$ with radius $r$, $B_r(x):=\{y \in \R^n \;|\; \norm{y-x} \le r\}$. For $\epsilon > 0$ and a set $A$, 
$A+B_\epsilon\coloneqq \bigcup_{x\in A} B_\epsilon(x)$,
and $A-B_\epsilon\coloneqq A \setminus \bigcup_{x\in A^c} B_\epsilon(x)$.

\begin{table*}
\centering
\caption{Comparison of reachability methods discussed in the paper. Background in Section \ref{sec:probDef} and discussion in Section \ref{sec:discussion}.}
\vspace{0.5em}
\renewcommand{\arraystretch}{1.5} 
\addtolength{\tabcolsep}{-4pt}
\begin{tabular}{|l||l|c|c|c|c|}
\hline
\multirow{2}{*}{{\footnotesize Methods}} & {\footnotesize Value function} & \multirow{2}{*}{\begin{tabular}{c} {\footnotesize Matching}  \\ {\footnotesize barrier constraint} \end{tabular}} & \multirow{2}{*}{{\footnotesize Boundedness}} & \multirow{2}{*}{{\footnotesize Continuity}} & \multirow{2}{*}{\begin{tabular}{c} {\footnotesize Sol. Unique.} \\ {\footnotesize of HJ-PDE} \end{tabular}} \\ \cdashline{2-2}
& {\footnotesize Diff. Inequality in HJ-PDE} &                    &                    &                    &                    \\ \hline \hline
\multirow{2}{*}{\footnotesize BRT with discount \cite{xue2018reach, akametalu2018minimum} } & \begin{tabular}{l}\quad\scriptsize $V(x) := \underset{\strategy}{\inf}\;\underset{\ctrl}{\sup}\underset{t\in[0,\infty)}{\inf} e^{-\discount t} h_{S}(\traj(t))$\end{tabular}& \multirow{2}{*}{no} & \multirow{2}{*}{\textbf{yes}} & \multirow{2}{*}{\textbf{yes}} & \multirow{2}{*}{\textbf{yes}} \\ \cdashline{2-2}
& \begin{tabular}{c} {\vspace{0.25em}\footnotesize \quad $\underset{u \in U}{\max}\underset{d \in D}{\min}\frac{\partial V}{\partial x}\!\cdot\!f(x,u,d)\!-\!\discount V \!\geq\!0$}\end{tabular} &                    &                    &                    &                    \\ \hline
\multirow{2}{*}{\footnotesize CBVF \cite{choi2021robust, tonkens2022refining} } & \begin{tabular}{l} \quad\scriptsize $V(x) := \underset{\strategy}{\inf}\;\underset{\ctrl}{\sup}\underset{t\in[0,\infty)}{\inf} e^{\discount t} h_{S}(\traj(t))$\end{tabular} & \multirow{2}{*}{\textbf{yes}} & \multirow{2}{*}{no} & \multirow{2}{*}{no} & \multirow{2}{*}{no} \\ \cdashline{2-2}
& \begin{tabular}{c} {\vspace{0.25em}\footnotesize \quad $\underset{u\in U }{\max}\underset{d\in D}{\min}\frac{\partial V}{\partial x}\!\cdot\!f(x,u,d)\!+\!\discount V \! \geq \! 0$ }\end{tabular} &                    &                    &                    &                    \\ \hline
\multirow{2}{*}{\footnotesize BRT w/o discount \cite{Fisac2018}} & \begin{tabular}{l} \quad{\scriptsize  $V(x) := \underset{{\strategy}}{\inf}\;\underset{\ctrl}{\sup}\underset{t\in[0,\infty)}{\inf}  h_{S}(\traj(t))$} \end{tabular} & \multirow{2}{*}{no} & \multirow{2}{*}{\textbf{yes}} & \multirow{2}{*}{no} & \multirow{2}{*}{no} \\ \cdashline{2-2}
& \begin{tabular}{c} {\vspace{0.25em}\footnotesize \quad $\underset{u \in U}{\max}\underset{d\in D}{\min}\frac{\partial V}{\partial x}\cdot f(x,u,d) \geq 0$} \end{tabular} &                    &                    &                    &                    \\ \hline
\multirow{2}{*}{\begin{tabular}{l}{\footnotesize \textbf{FRT with discount}}\\{\footnotesize \textbf{(Ours)}}
\end{tabular}}
 & \begin{tabular}{l}\quad\scriptsize $V(x) := \underset{\strategy}{\sup}\;\underset{\ctrl}{\inf}\underset{t\in(-\infty,0]}{\sup} e^{\discount t} h_{S}(\traj(t))$ 
\end{tabular} & \multirow{2}{*}{\textbf{yes}} & \multirow{2}{*}{\textbf{yes}} & \multirow{2}{*}{\textbf{yes}} & \multirow{2}{*}{\textbf{yes}} \\ \cdashline{2-2}
& \begin{tabular}{c} {\vspace{0.25em}\footnotesize \quad $\underset{u\in U}{\max}\underset{d\in D}{\min}\frac{\partial V}{\partial x}\!\cdot\!f(x,u,d)\!+\!\discount V\!\geq \!0$ }\end{tabular} &                    &                    &                    &                    \\ \hline
\end{tabular}
\label{table:comparision_CBF_Reachability}
\end{table*}

\section{Background}
\label{sec:probDef}

\subsection{Control Invariance}
\vspace{-0.25em}
We first consider a general nonlinear time-invariant system represented by an ODE
\vspace{-0.25em}
\begin{align}
    \dot{\traj}(t) = f(\traj(t),\ctrl(t))\;\text{for}\;t>0, \qquad \traj(0)=x,
    \label{eq:dynamics_no_d}
\vspace{-0.5em}
\end{align}
where $x\in \R^n$ is an initial state, $\traj:[0,\infty) \rightarrow \R^n$ is the solution to the ODE, and $\ctrl:[0,\infty)\rightarrow U$ is a Lebesgue measurable control signal with $U\subset\R^{m_u}$. We assume that $U$ is compact and use $\cfset$ to denote the set of Lebesgue measurable control signals. 
We also assume that the system \eqref{eq:dynamics_no_d} satisfies the following conditions.
\begin{assumption}[on vector field of \eqref{eq:dynamics_no_d}]~
    \begin{enumerate}
        \item $f:\R^n \times U \rightarrow \R^n ~\textnormal{is uniformly continuous,}$ 
        \item $f(\cdot,u)$ is Lipschitz continuous in $x\in\R^n$ for each $u\in U$, 
        \item $\exists M > 0$ such that $\norm{f(x, u)} \le M$ $\forall x\in \R^n, u \in U$.
    \end{enumerate}
\label{assumption:dynamics_no_d}
\end{assumption}

\noindent Under the above conditions, the solution to the ODE dynamics \eqref{eq:dynamics_no_d} is unique for any $\ctrl\in\cfset$ and initial state $x\in\R^n$. We will call the solution $\traj$ the \textit{(forward) trajectory} from the initial state $x$.

Let $\X\subset \R^n$ be the constraint set, i.e. the set that the system must remain within to maintain safety. The main challenge of finding a control signal $\ctrl\in \cfset$ such that for given $\traj(0) \in \X$, $\traj(t) \in \X$ for all $t \ge 0$ (i.e. $\traj(\cdot)$ remains safe) is that there may be some states in $\X$ from which exiting the set $\X$ is inevitable regardless of the choice of $\ctrl$. An effective way of ruling out these failure states is to consider a subset of $\X$ that is control invariant.




\begin{definition}[(Forward) control invariant \cite{blanchini1999set}]
\label{def:ci}
A set $S \subset \R^n$ in the state space is \textit{(forward) control invariant} under the dynamics \eqref{eq:dynamics_no_d} if for all $x \in S$, there exists a control signal $\ctrl\in\cfset$ such that $\traj(t)\in S$ for all $t\ge0$. We also say that such $\ctrl$ renders the trajectory $\traj$ \textit{forward invariant} in $S$.
\end{definition}

By the above definition, a trajectory starting inside a control invariant set $S$ that is a subset of $\X$ can remain within $S$ for all time, and therefore can stay safe in $\X$. The control invariance can be determined by a geometric relationship between the vector field and the tangent cone of the set:



\begin{lemma}
\label{lemma:tangential_ci_tangent_cone} (Tangential characterization of closed control invariant sets \cite[Theorem 11.3.4]{aubin2011viability}) A closed set $S\subset \R^n$ is (forward) control invariant under the dynamics \eqref{eq:dynamics_no_d} if and only if for all $x\in \partial S$,\vspace{-0.5em}
\begin{equation}
\label{eq:ci_tangent_cone}
    \exists u \in U ~\textnormal{such that}~ f(x, u) \in T_S(x),\vspace{-0.5em}
\end{equation}
where $T_S(x)$ is the tangent cone to $S$.\footnote{Given a closed set $C \subset \R^n$, the (Bouligand's) tangent cone to $C$ at $x \in C$ is defined as
\vspace{-0.5em}
\begin{equation}
T_C(x):=\left\{z \in \R^n \mid \liminf _{\tau \rightarrow 0} \frac{\operatorname{dist}(x+\tau z, C)}{\tau}=0\right\},
\vspace{-0.5em}
\end{equation}
where $\operatorname{dist}(y, C):=\min_{z\in C}{\norm{y -z}}$ \cite{clarke2008nonsmooth}. The tangent cone captures the feasible directions in which one can move from the point $x$ within the set $C$.}
\end{lemma}

We consider a corollary of the lemma in a special case when the set $S$ has a differentiable boundary (Condition~\ref{assumption:S}), by introducing a scalar function $\cbf:\R^n \rightarrow \R$ that satisfies Condition~\ref{assumption:hS}. 


\begin{condition}[\jcnote{Differentiability of the boundary}]
\label{assumption:S} \jcnote{$C$ is a closed set whose interior is non-empty, and whose boundary, $\partial C$, \textit{is continuously differentiable}.}\footnote{For each point $x\in\partial C$, there exists $r>0$ and a $C^1$ function $\eta:\R^{n-1}\rightarrow \R$ such that $C \cap B_r(x) = \{x\in B_r(x) ~|~ x_n \ge \eta(x_1,...,x_{n-1}) \}$, where relabeling and reorienting the coordinates axes are allowed \cite{evans2010partial}.}
\end{condition}

\begin{condition}
\jcnote{Given a closed set $C$, $h_C:\R^n\to\mathbb{R}$ is a function whose zero-superlevel set is $C$, i.e. $C = \{x\in\mathbb{R}^{n} ~|~ h_C(x) \geq 0\}$, and satisfies the following conditions:
\begin{enumerate}
    \item \;
    \vspace{-2em}
    \begin{align}
        \interior{C} & = \{x\in\mathbb{R}^{n} ~|~ h_C(x) > 0\}, \nonumber \\
        \partial C & = \{x\in\mathbb{R}^{n} ~|~ h_C(x) = 0\}. 
    \label{eq:cond_for_S1}
    \vspace{-1em}
    \end{align}        
    \item (Differentiability and boundedness) $h_C$ is uniformly continuously differentiable and both upper and lower bounded.
    \item  (Regularity) $\exists {\epsilon} > 0$ such that
    \vspace{-0.5em}
        \begin{align}
            \frac{\partial h_C}{\partial x}(x) \neq 0 \quad \forall x\in\partial C + B_{\epsilon}.
            \label{eq:cond_for_S2}  
    \vspace{-0.5em}
        \end{align}
        \end{enumerate}
}
\label{assumption:hS}
\end{condition}

\begin{proposition}
\label{prop:S-regularity}
\jcnote{For any $h_C$ satisfying Condition~\ref{assumption:hS}, its zero-superlevel set $C$ satisfies Condition~\ref{assumption:S} \cite[Theorem 9.28]{rudin2021principles}, and for any set $C$ satisfying Condition~\ref{assumption:S}, $h_C$ satisfying Condition~\ref{assumption:hS} always exists \cite[Theorem 2.1]{lieberman1985regularized}.}
\end{proposition}

\noindent Given these conditions, Lemma~\ref{lemma:tangential_ci_tangent_cone} results in the following corollary: 
\begin{lemma}
\label{lemma:tangential_ci} $S$ satisfying Condition~\ref{assumption:S} is (forward) control invariant under the dynamics \eqref{eq:dynamics_no_d} if and only if for all $x\in \partial S$,
\vspace{-0.5em}
\begin{equation}
\label{eq:ci_tangent}
    \exists u \in U ~\textnormal{such that}~ \dcbf(x) \cdot f(x, u) \ge 0,
    \vspace{-0.5em}
\end{equation}
for $h_S$ satisfying Condition~\ref{assumption:hS}.
\end{lemma}

\begin{proof} This is a corollary of Lemma~\ref{lemma:tangential_ci_tangent_cone} by noticing that for $x\in \partial S$, 
\vspace{-0.75em}
\begin{equation}
    T_S(x) = \left\{ z \in \R^n | \dcbf \cdot z \ge 0 \right\}.
    \label{eq:TS}
\vspace{-0.5em}
\end{equation}
\vspace{-1em}
\end{proof}
\vspace{-0.5em}

Lemma~\ref{lemma:tangential_ci} is known as Nagumo's theorem for autonomous systems \cite{nagumo1942lage}. For a given $S$ that is control invariant, \eqref{eq:ci_tangent} holds for any $\cbf$ satisfying Condition~\ref{assumption:hS}. The specific choice of $\cbf$ does not affect condition \eqref{eq:ci_tangent}. 



Next, we extend the concept to systems with disturbance. There exist various formulations of robustness with respect to disturbances or uncertainties in dynamics \cite{blanchini1999set, jankovic2018robust, kolathaya2018input, krstic2021inverse}. In this paper, we employ the differential game that interprets the disturbance as an adversary playing against the control input \cite{evans1984differential}, as commonly done in the Hamilton-Jacobi analysis \cite{cardaliaguet1996differential, Fisac2018, xue2018reach}. For this, we consider the system dynamics
\vspace{-0.5em}
\begin{align}
    \dot{\traj}(t) = f(\traj(t),\ctrl(t),\dstb(t))\;\text{for}\;t>0,\qquad \traj(0)=x,
    \label{eq:dynamics}
\vspace{-0.5em}
\end{align}
where $\dstb:[0,\infty)\rightarrow D$ is a Lebesgue measurable disturbance signal and $D\subset\R^{m_d}$ is a compact set. Control systems \eqref{eq:dynamics_no_d} can be regarded as a special case when the disturbance set $D$ in \eqref{eq:dynamics} is set to a singleton (e.g. $D = \{0\}$). We use $\dfset$ to denote the set of Lebesgue measurable disturbance signals. We assume conditions on the dynamics, similar to Assumption~\ref{assumption:dynamics_no_d}:
\begin{assumption}[on vector field of \eqref{eq:dynamics}]~
    \begin{enumerate}
        \item $f:\R^n \times U \times D \rightarrow \R^n ~\textnormal{is uniformly continuous,}$ 
        \item $f(\cdot,u,d)$ is Lipschitz continuous in $x\in\R^n$ for each $(u,d)\in U\times D$, 
        \item $\exists M > 0$ such that $\norm{f(x, u, d)} \le M$ $\forall x\in \R^n, (u, d) \in U \times D$,
    \end{enumerate}
\label{assumption:dynamics}
\vspace{-0.5em}
\end{assumption}

\noindent so that under the above conditions, the solution to the ODE \eqref{eq:dynamics} is unique for any pair of $(\ctrl,\dstb)\in\cfset\times\dfset$ and initial state $x\in\R^n$ \cite{evans1984differential}.

To ensure safety under the most adversarial disturbance, we assume that the disturbance can use the control signal's current and previous information, whereas the control is not aware of the current disturbance input, by considering the notion of the \textit{non-anticipative strategies} \cite{evans1984differential}:
\vspace{-0.75em}
\begin{align}
\label{eq:nonanticipative}
\begin{split}
    \Strategy\coloneqq \{& \strategy:\cfset\rightarrow\dfset~|~\forall s\in[0,\infty) \text{ and } \ctrl,\bar\ctrl\in\cfset, \\
    &\text{if } \ctrl(\tau)=\bar\ctrl(\tau) \text{ a.e. }\tau\in[0,s],\; \\
    &\text{then } \strategy[\ctrl](\tau)=\strategy[\bar\ctrl](\tau) \text{ a.e. }\tau \in[0,s] \}.
\end{split}
\vspace{-0.75em}
\end{align}
Using the notion of non-anticipative strategies, we define the robust control invariant set under the dynamics \eqref{eq:dynamics}.

\begin{definition}[Robustly (forward) control invariant \cite{cardaliaguet1996differential, xue2018reach}]
\label{def:rci}
A set $S \subset \R^n$ is \textit{robustly (forward) control invariant} (under the dynamics \eqref{eq:dynamics}) if, for all $x \in S$, $\strategy\in\Strategy$, for any $\epsilon > 0$ and $T > 0$, there exists a control signal $\ctrl(\cdot)\in\cfset$ such that $\traj(t)\in S + B_\epsilon$ for all $t\in[0, T]$.
\end{definition}
If $S$ is an open set, the notion of $\epsilon$ and $T$ can be dropped. 
However, at the boundary of a closed set $S$, the disturbance can react to the current control input to drive the system outside of $S$. Thus, $\traj$ might not stay in $S$ for all time although the trajectory $\traj$ will stay in $S+B_\epsilon$ for any small $\epsilon$. 
An example that elucidates the necessity of $\epsilon$ and $T$ is provided in \cite{cardaliaguet1996differential}.

Similar to Lemma~\ref{lemma:tangential_ci}, robust control invariant sets can be verified by examining the vector field of the dynamics at the boundary of the sets: 

\begin{lemma} \label{lemma:tangential} 
(Tangential characterization of robust control invariant sets) $S$ satisfying Condition~\ref{assumption:S} is robustly (forward) control invariant under the dynamics \eqref{eq:dynamics} if and only if for all $x\in \partial S$,
\vspace{-0.5em}
\begin{equation}
\label{eq:rci_tangent}
    \exists u \in U ~\textnormal{such that}~ \dcbf(x) \cdot f(x, u, d) \ge 0 \; \forall d\in D,
\vspace{-0.5em}
\end{equation}
for $h_S$ satisfying Condition~\ref{assumption:hS}.
\end{lemma}
\vspace{-0.5em}
\begin{proof} This results from \cite[Theorem 2.3]{cardaliaguet1996differential}, and \eqref{eq:TS}.
\end{proof}

\subsection{Control Barrier Functions}
Lemma~\ref{lemma:tangential_ci} implies that a safe control input on the boundary of the set $S$ can render the trajectory $\traj$ forward invariant in $S$. Control barrier functions (CBFs), first introduced in \cite{Ames2016}, additionally impose conditions on the input when the trajectory is strictly inside the set before reaching the boundary. Here, we extend its definition to robust CBF for the dynamics with disturbance \eqref{eq:dynamics}.

\begin{definition}[Robust Control Barrier Function] \label{def:rcbf}
A function $\cbf:\R^n\rightarrow \R$ that satisfies Condition~\ref{assumption:hS} for a closed set $S$ is a \textit{robust CBF} for the dynamics \eqref{eq:dynamics} if there exists an extended class $\mathcal{K}$ function $\alpha$ such that, for all $x\in S$,
\vspace{-0.75em}
\begin{align}
    \max_{u\in U}\min_{d \in D} \dcbf(x)\cdot f(x,u,d) +\alpha\left(\cbf(x)\right) \geq 0.
    \label{eq:rCBF_ctrl_condition_original}
\vspace{-0.5em}
\end{align}
Here, $\alpha:\R\rightarrow\R$ is an extended class $\mathcal{K}$ function if it is continuous and strictly increasing and satisfies $\alpha(0)=0$. We say the \textit{\cbfconstraint~is feasible at} $x$ if condition \eqref{eq:rCBF_ctrl_condition_original} holds for $x$. 
\end{definition}

This paper considers a particular class $\mathcal{K}$ function $\alpha(y)=\discount y$ for a constant $\discount>0$, as in \cite{ogren2006autonomous, nguyen2016exponential},
\vspace{-0.5em}
\begin{align}
    \max_{u\in U}\min_{d \in D} \dcbf(x)\cdot f(x,u) +\discount \cbf(x) \geq 0.
    \label{eq:CBF_ctrl_condition}
\vspace{-0.5em}
\end{align}
This is the most common choice of class $\mathcal{K}$ function used in the CBF literature, and enables us to make a connection between CBFs and reachability value functions where $\discount$ will play the role of a discount factor in the reachability formulation. Intuitively, \eqref{eq:CBF_ctrl_condition} ensures that $\cbf(\traj(t))$ does not decay faster than the exponentially decaying
curve $\dot{h}_{S} (\traj(t)) = - \discount  \cbf (\traj(t))$. This induces the braking mechanism to any trajectory $\traj$ approaching the boundary of $S$. \jcnote{At $x \in \partial S $ where $\cbf (x) = 0$, \eqref{eq:CBF_ctrl_condition} implies \eqref{eq:rci_tangent}; thus, the existence of the robust CBF $\cbf$ is a sufficient condition for $S$ being robust control invariant:} 

\begin{proposition} If a robust CBF $\cbf$ exists, $S$ is robustly control invariant. 
\label{prop:rcbf}
\end{proposition}

\begin{remark} \label{rm:cbf-regularity}
\jcnote{Since a robust CBF $h_S$ must satisfy Condition~\ref{assumption:hS} (originally introduced in \cite{Ames2016}), the invariant set $S$ must satisfy Condition~\ref{assumption:S} by Proposition \ref{prop:S-regularity}.}
\end{remark}


\subsection{\jcnote{Backward reachability for control invariance}}
\label{sec:brt}

We now introduce how backward reachability analysis is used to construct a maximal robust control invariant set contained in the constraint set $X\subset\mathbb{R}^n$. We consider the following definition of the backward reachable tube (BRT):

\begin{definition}[Inevitable BRT] 
\label{def:brt}
For a given terminal set $C \subset \R^n$ which is an open set, we define the (infinite-horizon inevitable) BRT of $C$ as the following set.
\vspace{-0.5em}
\begin{align}
     & \brt{C} \coloneqq \Big\{ x\!\in\! \R^n ~\Big|~\exists \strategy\in\Strategy,T>0 \text{ s.t.} \forall \ctrl \in \cfset,\;\;\;\;\;\; \nonumber \\
    &\; \exists t\!\in\![0, T] \text{ s.t. } \traj(t)\!\in\!C, \text{ where } \traj \text{ solves } \eqref{eq:dynamics}.\!\Big\}
\label{eq:FRT}
\end{align}
\end{definition}
\vspace{-0.75em}
Consider the inevitable BRT of the failure region $X^c$. Then, the following holds:

\begin{proposition}
\label{prop:max-rci}
For a closed set $X$, $\{\brt{X^c}\}^c$ is the maximal robust control invariant set contained in $X$ \cite{cardaliaguet1996differential, Fisac2018}.
\end{proposition}

This maximal robust control invariant set is also known as the (robust) \textit{viability kernel} of the constraint set $X$ \cite{aubin2011viability}.

The $\brt{X^c}$ and its complement can be computed by considering the following BRT value function:
\begin{definition}[BRT value function with discount factor \cite{akametalu2018minimum, xue2018reach}]\;
\vspace{-0.75em}
\begin{equation}
\label{eq:brt_v_discounted}
V(x) := \underset{{\strategy \in \Strategy}}{\inf} \; \underset{\ctrl \in \cfset}{\sup}\; \underset{t\in[0,\infty)}{\inf}  e^{-\lambda t} h_{X}(\traj(t)),
\end{equation}   
\vspace{-2em}
\end{definition}
where $h_{X}$ is a function that satisfies $X = \{x\in\mathbb{R}^{n} ~|~ h_X (x) \ge 0\}$. The control tries the maximize the minimum value of $h_X$ discounted in time. The value function is the viscosity solution \cite{bardi1997optimal} to the following Hamilton-Jacobi backward reachable tube variational inequality (HJ-BRT-VI):
\vspace{-0.75em}
\begin{align}
        \displaystyle 0=\min\Big\{ \
        & h_X(x) - V(x), \label{eq:thm_HJPDE_vanilla} \\
        & \max_u \min_d \frac{\partial V}{\partial x}\cdot f(x,u,d) - \lambda V(x)\Big\} \nonumber \vspace{-0.75em}
\end{align}

\begin{figure*}
\centering
\includegraphics[width=\textwidth]{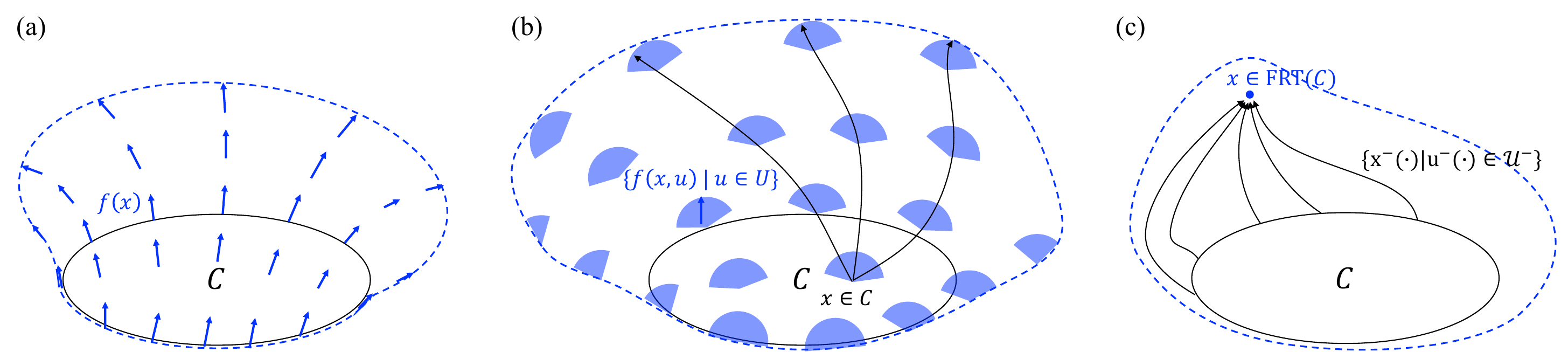}
\vspace{-1em}
\caption{(a) For an autonomous system $\dot{\traj}(t) = f(\traj(t))$, the forward reachable tube (FRT) of the initial set $C$ is the union of the forward trajectories starting from $C$. (b) For control systems, the viable (maximal) FRT is the collection of all possible trajectories that depart from $C$. (c) On the other hand, the inevitable (minimal) FRT, the main focus of this study, is a collection of states such that every trajectory reaching it must have passed through $C$ at some point in the past.}
\label{fig:frt_defs}
\vspace{-.5em}
\end{figure*}

\vspace{-1em}
We now discuss why this value function has little relevance to robust CBFs in Definition \ref{def:rcbf}, and is not ideal for safety control although it is useful to verify the maximal control invariant set. When $\lambda = \gamma > 0$, because of the discount term applied to $h_X$ indefinitely in time to, for all states within the invariant set, $\inf_{t\in[0,\infty)}  e^{-\lambda t} h_{X}(\traj(t))$ flattens to zero, forming the maximal control invariant set as $\{x\in\mathbb{R}^{n} ~|~ V (x) = 0\}$ \cite{xue2018reach} (Table \ref{table:comparision_CBF_Reachability}, first row). Within the invariant set, the value function has zero gradient, making it not usable for constraining the control input as in the barrier constraint \eqref{eq:CBF_ctrl_condition}. In fact, as seen in \eqref{eq:thm_HJPDE_vanilla}, the value function satisfies the inequality condition \vspace{-0.5em}
\[
\max_u \min_d \frac{\partial V}{\partial x}\cdot f(x,u,d) - \lambda V(x) \ge 0.
\vspace{-0.5em}
\]
To make this condition match with the barrier constraint, we need to consider $\lambda = -\gamma < 0$, resulting in an anti-discount (or an exploding factor) to the cost function. 
The work in \cite{choi2021robust} presents this formulation for the finite-horizon BRT, which is named as the control barrier-value function (CBVF). However, its extension to infinite horizon might lead to an unbounded and discontinuous value function and non-unique solutions to the corresponding HJ-PDE (Table \ref{table:comparision_CBF_Reachability}, second row). Similarly, the value function without any discount ($\lambda = 0$) can lead to discontinuous and non-unique solutions to the corresponding HJ-PDE (Table \ref{table:comparision_CBF_Reachability}, third row).

Finally, in general, the maximal control invariant set contained in the desired safety constraint set $\X$ might have a non-differentiable boundary \cite{aubin2011viability}. However, the differentiability of the boundary of the zero-superlevel set of the CBF is a feature derived from its definition, as discussed in Remark~\ref{rm:cbf-regularity}, which prohibits designing a CBF based on the maximal control invariant set.

These limitations of backward reachability approaches in designing CBFs motivate us to investigate the forward reachability as an alternative, which we discuss next.

\section{Forward Reachability Analysis}
\label{sec:FRT_RCI}

\vspace{-0.5em}

We now introduce forward reachability analysis to control invariant sets. We first provide background on the forward reachability of a set.

\vspace{-0.25em}

\subsection{Forward Reachability}
\label{sec:FRT}

\vspace{-0.25em}

Forward reachability analyzes a set's evolution in the future. The forward reachable tube (FRT) of a set, roughly speaking, encompasses states that are reached by trajectories that depart from the initial set. This concept is illustrated in Figure \ref{fig:frt_defs}. For autonomous systems (e.g. $\dot{\traj}(t) = f(\traj(t))$), the forward evolution of a set is uniquely determined (Fig~\ref{fig:frt_defs}~(a)). However, for systems with control and/or disturbance inputs, the trajectory, and thus the forward reachable tube, can be determined in various ways according to control and disturbance.

Consider the dynamics without disturbance, satisfying \eqref{eq:dynamics_no_d}. At one extreme, the control can use its best effort to get further away from the original set $C$, and at the other extreme, the control works to stay as close to $C$ as possible. The former would cause the FRT to expand maximally covering all the states such that from $C$, reaching them is viable (Fig~\ref{fig:frt_defs}~(b)), and the latter would induce the FRT to grow minimally, encapsulating only the states which inevitably must have evolved from $C$ (Fig~\ref{fig:frt_defs}~(c)). From this intuition, we are able to define the viable (maximal) and inevitable (minimal) FRTs of a set $C$. Here, we only define the inevitable FRT which is the focus of this study, and readers are referred to \cite{mitchell2007comparing} for the definition of the viable FRT. Applications of forward reachability in the safe control and verification literature primarily focused on determining the viable FRTs, and checking whether this set intersects with the unsafe set \cite{althoff2014online, kousik2020bridging, wetzlinger2022fully}, with limited interests in the utility of inevitable FRTs \cite{mitchell2007comparing}.


To introduce the formal definition of the inevitable FRT, we use a separate notation for the solution of the ODE whose \textit{terminal} state is specified as $x$ (as opposed to initial states being specified as $x$ in \eqref{eq:dynamics}):
\vspace{-0.5em}
\begin{equation}
\resizebox{0.87\hsize}{!}{$\displaystyle
    \dot{\traj}^- (t) = f(\BWtraj(t),\BWctrl(t),\BWstrategy[\BWctrl](t)), \;t<0,\;\; \BWtraj(0) = x,$}
    \label{eq:backward_dynamics}
\vspace{-1em}
\end{equation}

where $\BWctrl:(-\infty,0]\rightarrow\cset$ is an element of $\BWcfset$, a set of measurable backward control signal. Evaluating whether $\BWtraj$ reaches the set $C$ when time is considered to flow backward will tell us whether $x$ belongs to the forward reachable tube of $C$. With $\BWdfset$ denoting a set of measurable backward disturbance signals $\BWdstb:(-\infty,0]\rightarrow\dset$, $\BWstrategy$ is a non-anticipative strategy for the disturbance \textit{backward} in time.
The separate notation is necessary due to the causality of the disturbance strategy being reversed in time. 
\begin{definition}[Inevitable FRT] 
\label{def:frt}
For a given initial set $C \subset \R^n$ which is an open set, we define the (infinite-horizon inevitable) FRT of $C$ as the following set.
\vspace{-0.5em}
\begin{align}
     & \frt{C} \coloneqq \Big\{ x\!\in\! \R^n ~\Big|~\exists \BWstrategy\in\BWStrategy,T>0 \text{ s.t.} \forall \ctrl \in \BWcfset,\;\;\;\;\;\; \nonumber \\
    &\; \exists t\!\in\![-T,0] \text{ s.t. } \BWtraj\!(t)\!\in\!C, \text{ where } \BWtraj \text{solves } \eqref{eq:backward_dynamics}\!\Big\}
\label{eq:FRT}
\end{align}
\end{definition}

\vspace{-1em}

Note that ~$\frt{\cdot}$ can be interpreted as a set mapping, $\text{FRT}:2^{\R^n}\rightarrow2^{\R^n}$. In words, $\minfrt{C}$ is a collection of states such that \textit{every} trajectory reaching it forward in time must have passed through $C$ at some point in the past. The FRT is shaped by the control aiming to restrain the growth of the FRT, whereas the disturbance is assumed to act adversarially and attempts to grow the FRT. \jcnote{For control systems, the inevitable FRT is minimal \cite{mitchell2007comparing} since it excludes any state that can be reached by a trajectory that does not evolve from $C$.}

\begin{figure*}[h]
\centering
\vspace{-0.5em}
\includegraphics[width=\textwidth]{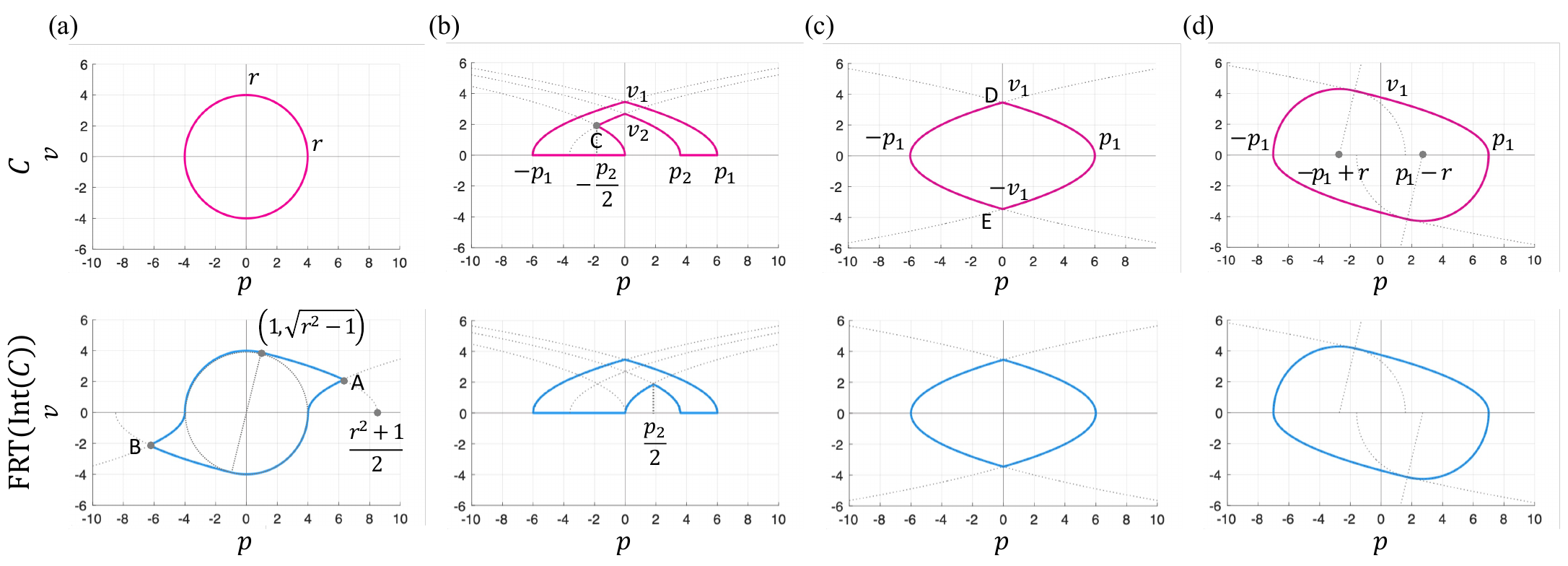}
\vspace{-2em}
 \caption{\textcolor{black}{Forward} reachable tubes under double integrator dynamics for various shapes of the initial set $C$. In the first row, $C$ is visualized as the interior of the pink level curve. The interior of the blue level curve in the second row is $\frt{\interior{C}}$ for each case. (a) Smooth $C_a$ that is not control invariant, resulting in $\frt{\interior{C_a}}\neq \interior{C_a}$. The FRT is still not control invariant. (b, c) Nonsmooth sets $C_b$, $C_c$ that are control invariant. In case (b), $\frt{\interior{C_b}}\neq \interior{C_b}$; this shows that Condition~\ref{assumption:S} is required for Corollary~\ref{cor:frt_corollary} to hold. In case (c), $\frt{\interior{C_c}}=\interior{C_c}$.  (d) Smooth control invariant $C_d$ that results in $\frt{\interior{C_d}} = \interior{C_d}$ according to Corollary~\ref{cor:frt_corollary}.}
\label{fig:double_integrator}
\vspace{-0.5em}
\end{figure*}

\subsection{\jcnote{Forward Reachable Tubes of Control Invariant Sets}}
\label{sec:FRT_CtrlInv}
\vspace{-0.5em}

We now investigate the relationship between the inevitable FRTs and control invariance. Since the inevitable BRT of $X^c$ led to the discovery of the maximal control invariant set in the constraint set $X$ (Proposition~\ref{prop:max-rci}), one might hypothesize that the inevitable FRT may lead to the minimal control invariant set containing the initial set $C$. However, we show that this is only true when the boundary of the FRT is differentiable:

\begin{theorem}\label{thm:frt}    
    For any given set $C$, if $\frt{\interior{C}}$ satisfies Condition~\ref{assumption:S}, $\frt{\interior{C}}$ is (the minimal) robust control invariant set under~\eqref{eq:dynamics} containing $C$.
\end{theorem}
\vspace{-0.75em}
\begin{proof}
See Appendix \ref{appendix:thm_frt}.
\end{proof}

Condition~\ref{assumption:S} is crucial for the theorem's proof, which establishes the equivalence between $f(x, \bar \pi(x)) \in T_S(x)$ and $- f(x, \bar \pi(x)) \in T_{\interior{S}^c}(x)$ (Lemma~\ref{lemma:positive_negative}), where $S=\frt{\interior{C}}$. At the non-differentiable boundary of a set $S$, this is not necessarily true. In this case, a forward trajectory can be inevitably ``leaked'' from the interior of the FRT through the point at the boundary where it is non-differentiable. An example of this incident is introduced in Section \ref{subsec:di1} (Figure~\ref{fig:double_integrator}a).

In addition, the following corollary shows that the interior of any control 
invariant set with a differentiable boundary is a fixed point of the $\frt{\cdot}$ set operator.

\begin{corollary}\label{cor:frt_corollary}        
    A set $S$ satisfying Condition~\ref{assumption:S} is robustly control invariant under \eqref{eq:dynamics} if and only if $\frt{\interior{S}}=\interior{S}$.
\end{corollary}
\vspace{-0.75em}
\begin{proof}
    See Appendix \ref{appendix:proof-frt-fixed-point}.
\end{proof}
\vspace{-0.75em}
As noted in Remark \ref{rm:cbf-regularity}, since control invariant sets associated with any CBFs have to satisfy Condition~\ref{assumption:S}, Corollary~\ref{cor:frt_corollary} implies that this fixed point property is a key feature of a valid CBF design.

Finally, in the case when the FRT or the initial set $C$ does not satisfy Condition~\ref{assumption:S}, the FRT stills provide a valuable information in characterizing an invariant set that contains $C$: 


\begin{theorem}\label{cor:overapproximation}
For any given set $C$, the minimal robust control invariant set $S$ which includes $C$ and satisfies Condition~\ref{assumption:S} is always a superset of $\frt{\interior{C}}$.
\end{theorem}
\vspace{-0.75em}
\begin{proof}
    See Appendix \ref{appendix:proof-frt-overapproximation}.
\end{proof}
\vspace{-0.75em}
Note that again, Condition~\ref{assumption:S} is necessary for the invariant set $S$ to be the superset of $\frt{\interior{C}}$. A counterexample when Condition~\ref{assumption:S} does not hold is explained in Section \ref{subsec:di1} (Figure~\ref{fig:double_integrator}b).

This theorem implies that the FRT can be used to lower-bound when finding a control invariant set that contains $C$. We extend this property further in Sections \ref{sec:HJ_FRT_CBF} and \ref{sec:frt-cbf-learning} to devise a CBF design method whose control invariant set overappoximates the FRT. 


\subsection{Example: Double Integrator}
\label{subsec:di1}

We introduce examples of four different initial sets (Figure~\ref{fig:double_integrator} first row) and their FRTs (Figure~\ref{fig:double_integrator} second row) to illustrate the importance of the differentiability of its boundary for control invariance. These examples are based on a simple double integrator system defined by $\dot{p}  = v, \dot{v}  = u$, with state $x=[p \; v]^T$, and control input $u$, with control bound $u \in [-1, 1]$.
Note that curves $p=c\pm\frac{1}{2}v^2$ characterize trajectories with the saturated input $u=\pm1$, $\ddot{p}=\pm 1$, that decelerate or accelerate until $v=0$, .
The four specific $C$ sets are: 
\begin{enumerate}[label=(\alph*)]
\item $C_a$ is defined as the circular region with radius $r$ centered at the origin. This set $C_a$ satisfies Condition~\ref{assumption:S} but is not control invariant.
\item $C_b$ is formed by five curves, $p = -p_1+\frac{1}{2}v^2$, $p = p_1-\frac{1}{2}v^2$, $p =p_2-\frac{1}{2}v^2$, $p = -p_2+\frac{1}{2}v^2$, $p=-\frac{1}{2}v^2$, and the $p$-axis, as shown in Figure~\ref{fig:double_integrator}(b). This set does not satisfy Condition~\ref{assumption:S}, but is control invariant.
\item $C_c$ is formed by two curves, $p = -p_1+\frac{1}{2}v^2$, and $p = p_1-\frac{1}{2}v^2$. This set also does not satisfy Condition~\ref{assumption:S} but is control invariant.
\item $C_d$ is formed by two curves, $p = -p_1+\frac{1}{2}v^2$ and $p = p_1-\frac{1}{2}v^2$ ($p_1 > 1$), and two arcs whose radius is $r=-1+2\sqrt{p_1}$ that are tangential to the curves whose centers are positioned at $(-p_1 + r, 0), (p_1 -r, 0)$, respectively. This set satisfies Condition~\ref{assumption:S} and is also control invariant.
\end{enumerate}

\begin{figure}[t]
\centering
\includegraphics[width=0.75\columnwidth]{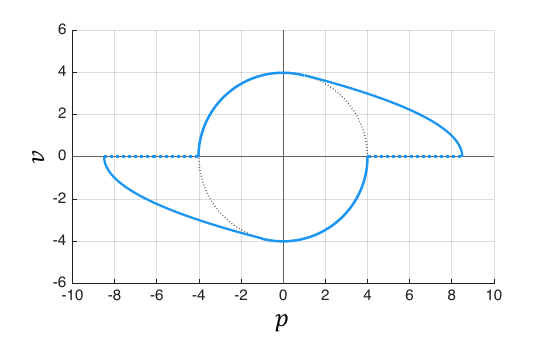}
\vspace{-1em}
\caption{Minimal control invariant set that contains $S_a$ in Figure \ref{fig:double_integrator}a, which is a superset of $\frt{\interior{C_a}}$.}
\label{fig:case_a_inv}
\end{figure}

\jcnote{The first case demonstrates that the FRT can be a strict superset of $\interior{C_a}$ when $C_a$ is not control invariant. The resulting $\frt{\interior{C_a}}$ is still not control invariant since the trajectory is bound to escape the set at points A and B. The actual minimal control invariant set that contains $C_a$ is visualized in Figure ~\ref{fig:case_a_inv}. This example reveals a challenge in constructing a control invariant set with exact FRT, when the initial set is not control invariant. Thus, in the next section, we focus on utilizing Theorem~\ref{cor:overapproximation} to find an outer-approximated FRT that is control invariant.
}

\jcnote{
In the second and third cases, the control invariance of $C_a$ and $C_b$ can be checked analytically. The second case shows that the FRT can be a strict superset of $\interior{C_b}$ if Condition~\ref{assumption:S} is not met, even though $C_b$ is control invariant, showing the necessity of Condition~\ref{assumption:S} in Corollary~\ref{cor:frt_corollary}. Note that point C is where $f(x, \bar \pi(x))\!\in\!T_C(x)$ holds but $- f(x, \bar \pi(x))\!\in\!T_{\interior{C}^c}(x)$ does not hold. Moreover, since $C_b$ itself is the minimal control invariant set containing $C_b$, Theorem~\ref{cor:overapproximation} does not hold due to the absence of Condition~\ref{assumption:S}.
}

\jcnote{
In the third case, since $- f(x, \bar \pi(x))\!\in\!T_{\interior{C}^c}(x)$ holds at both points D and E where $\partial C_c$ is not smooth, $\frt{\interior{C_c}}=\interior{C_c}$. Finally, the set $C_d$ in the last case satisfies Condition~\ref{assumption:S} and is also control invariant. Thus, according to Corollary~\ref{cor:frt_corollary}, $\frt{\interior{C_d}}$ remains the same as $\interior{C_d}$.
} 


\vspace{-0.5em}

\section{FRT value function and CBF}
\label{sec:FRT_CBF}

\vspace{-0.5em}

In this section, by taking the Hamilton-Jacobi approach to the forward reachability problem, we pose the computation of FRT as a differential game. The FRT is characterized by the value function proposed in Section \ref{sec:HJ_FRT}. This value function is the unique solution to the HJ-PDE proposed in Section \ref{subsec:frt-hj}. \jcnote{In Section \ref{sec:HJ_FRT_CBF}, we establish a connection between the proposed value function and the CBFs, by taking the supersolution of the HJ-PDE. This provides an interpretation of any valid CBF as a forward reachability value function, and also a constructive way to design CBF as a supersolution of the FRT value function. 
}

\subsection{FRT Value Function with Discount Factor}
\label{sec:HJ_FRT}

\vspace{-0.5em}

By noting that $h_C$ satisfying Condition~\ref{assumption:hS} serves as a distance-like metric to the boundary of $C$ and its sign serves as an indicator of the inclusion in $C$, we can rewrite the definition of the FRT in \eqref{eq:FRT} as follows:
\vspace{-0.5em}
\begin{align*}
         & \minfrt{\interior{C}} \\ 
         & = \!\big\{ x ~|~ \exists \BWstrategy\in\BWStrategy, \forall \BWctrl \!\in \!\BWcfset, \!\sup_{t\in(-\infty, 0]} h_C(\BWtraj(t)) \!>\!0 \big\} \\
    & = \!\big\{ x ~|~ \sup_{\BWstrategy\in \BWStrategy} \inf_{\BWctrl\in\BWcfset} \sup_{t\in(-\infty, 0]} h_C(\BWtraj(t)) > 0 \big\}. \vspace{-0.75em}
\end{align*}
Since rescaling $h_C(\BWtraj(t))$ with a positive constant at any time $t$ does not change its sign, the following holds:
\vspace{-0.5em}
\[
\resizebox{\hsize}{!}{$\displaystyle \minfrt{\interior{C}}\! = \Big\{ x ~| \sup_{\BWstrategy\in \BWStrategy} \inf_{\BWctrl\in\BWcfset} \sup_{t\in(-\infty, 0]} e^{\discount t} h_C(\BWtraj(t)) > 0 \Big\},$}\vspace{-1em}
\]         
where $\gamma \!>\!0$ and at each time $t\!\in\!(-\infty, 0]$, $h_C(\BWtraj(t))$ is rescaled by $e^{\discount t}$.
Thus, by defining the FRT value function of $C$, $\VV:\!\R^n\!\rightarrow\!\R$, as 
\vspace{-0.5em}
\begin{equation}
    \label{eq:Def_FRT_valueFunction}
    \VV(x) \coloneqq \sup_{\BWstrategy\in \BWStrategy}\inf_{\BWctrl\in\BWcfset} \JJ(x, \BWctrl, \BWStrategy)
\vspace{-1em}
\end{equation}
with the cost functional $\JJ\!:\!\R^n \!\times\! \BWcfset \!\times \BWStrategy\!\rightarrow\!\R$ defined as
\vspace{-0.5em}
\begin{equation}
    \label{eq:frt_cost}
    \JJ(x, \BWctrl, \BWstrategy) = \sup_{t\in(-\infty,0]}  e^{\discount t}h_C(\BWtraj(t)), \vspace{-1em}
\end{equation}
where $\BWtraj$ solves \eqref{eq:backward_dynamics} and $x$ is the terminal state of $\BWtraj$.

The value function $\VV$ captures a differential game between the control and the disturbance, wherein the optimal control signal of this game is verifying the existence of a trajectory that reaches $x$ without passing through $\interior{C}$ in the past under the worst-case disturbance. If such a trajectory does not exist, $\VV(x)$ is positive and $x$ is inside $\frt{\interior{C}}$, described by the following lemma.

\begin{lemma}
\label{lemma:frt_value}
Suppose $C\in\R^n$ is a closed set, and a bounded function $h_C$ satisfies Condition~\ref{assumption:hS}-(1). $\VV(x)$ is positive if and only if $x$ belongs to the FRT of the interior of $C$:
\vspace{-0.5em}
\begin{align}
    \frt{\interior{C}} & = \{x~|~\VV(x)>0\}, \nonumber \\
\{\frt{\interior{C}}\}^c & = \{x~|~\VV(x)=0\}.
    \label{eq:FRT_FRTValueFunction}
\end{align}    
\end{lemma}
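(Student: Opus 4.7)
The plan is to verify the chain of set equalities sketched in the paragraph immediately before the lemma statement, justifying each step. The key inputs are Assumption~\ref{assumption:hS}-1), which identifies $\{x : \cbf(x) > 0\}$ with $\interior{S}$, the boundedness of $\cbf$ (stated as an additional hypothesis in the lemma), and the ODE bounds in Assumption~\ref{assumption:dynamics}.

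For the inclusion $\{x : \VV(x) > 0\} \subseteq \frt{\interior{S}}$, I would unpack the outer supremum to extract a strategy $\BWstrategy^* \in \BWStrategy$ with $\delta := \inf_{\BWctrl \in \BWcfset} \sup_{t \in (-\infty,0]} e^{\discount t} \cbf(\BWtraj(t)) > 0$, where $\BWtraj$ solves \eqref{eq:backward_dynamics} under $\BWctrl$ and $\BWstrategy^*$. Since $|\cbf| \le M$ by boundedness, $e^{\discount t} \cbf(\BWtraj(t)) \le M e^{\discount t} < \delta/2$ whenever $t < T^* := \discount^{-1} \ln(\delta/(2M))$. Hence for every $\BWctrl$ the supremum is attained in the compact window $[T^*, 0]$, so some $t$ in that window gives $\cbf(\BWtraj(t)) > 0$, i.e., $\BWtraj(t) \in \interior{S}$. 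Setting $T := -T^* > 0$, this verifies Definition~\ref{def:frt} with strategy $\BWstrategy^*$, so $x \in \frt{\interior{S}}$.

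For the opposite inclusion $\frt{\interior{S}} \subseteq \{x : \VV(x) > 0\}$, Definition~\ref{def:frt} supplies $\BWstrategy^*$ and $T > 0$ such that for every $\BWctrl$, some $t \in [-T,0]$ has $\BWtraj(t) \in \interior{S}$ and hence $\cbf(\BWtraj(t)) > 0$. The challenge is to turn this pointwise-in-$\BWctrl$ positivity into a uniform positive lower bound on $\inf_{\BWctrl} \sup_{t \in [-T, 0]} e^{\discount t} \cbf(\BWtraj(t))$. I would argue this via a compactness/continuity argument: since $\|f\| \le M$, the set of trajectories $\BWtraj$ over the compact interval $[-T,0]$ is equi-Lipschitz and bounded; combining Arzel\`a-Ascoli with the weak-$\ast$ compactness of admissible controls in $L^\infty([-T,0]; \cset)$, the openness of $\interior{S}$, and the continuity of $\cbf$, a standard contradiction argument yields a uniform $\rho > 0$ with $\sup_{t \in [-T,0]} \cbf(\BWtraj(t)) \ge \rho$ for every $\BWctrl$, giving $\VV(x) \ge e^{-\discount T} \rho > 0$.

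The principal obstacle is this uniformity step in the second inclusion: extracting a strictly positive infimum from a pointwise statement. It is the technical heart of the argument, and the discount factor is precisely what lets the first inclusion sidestep a similar issue by automatically truncating the infinite horizon $(-\infty, 0]$ to a compact window.
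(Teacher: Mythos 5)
Your route is genuinely different from the paper's: you argue the two inclusions directly from Definition~\ref{def:frt}, whereas the paper time-reverses the dynamics via explicit bijections $\rho_\ctrl,\rho_{\strategy}$ between forward and backward signals, invokes the known value-function characterization of the viability kernel from \cite[Lemma 1]{xue2018reach} to get $\vk{\interior{S}^c}=\{x\mid \VV(x)=0\}$, applies the complementation Lemma~\ref{lemma:frt_partition} to obtain $\frt{\interior{S}}=\{x\mid\VV(x)\neq 0\}$, and finishes with the observation $\VV\ge 0$ (which follows from boundedness of $\cbf$ and $\discount>0$, exactly the truncation trick you use in your first inclusion). Your first inclusion is correct and self-contained.

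The gap is where you anticipated it: the uniformity step in the second inclusion. The compactness argument you sketch does not close for a general nonlinear $f$. Take a minimizing sequence $\BWctrl_k$ with $\sup_{t\in[-T,0]}e^{\discount t}\cbf(\BWtraj_k(t))\to 0$; Arzel\`a--Ascoli gives a uniformly convergent subsequence of trajectories, but the limit curve is in general only a trajectory of the \emph{relaxed} (convexified-velocity) system, since weak-$\ast$ limits of $U$-valued controls need not be $U$-valued and $\ctrl\mapsto f(x,\ctrl)$ need not pass to weak-$\ast$ limits unless the velocity sets $f(x,U,d)$ are convex --- an assumption the paper does not make. Hence the limit trajectory need not be $\BWtraj$ for any admissible $\BWctrl\in\BWcfset$, and you cannot contradict the hypothesis ``for every $\BWctrl$ there is $t$ with $\BWtraj(t)\in\interior{S}$.'' A second, independent difficulty is that the disturbance enters through a non-anticipative strategy $\BWstrategy^*[\BWctrl_k]$, and no continuity of $\BWctrl\mapsto\BWstrategy^*[\BWctrl]$ is available, so even identifying the limiting dynamics is problematic. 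The paper avoids all of this by outsourcing the hard equivalence to \cite{xue2018reach} and \cite{cardaliaguet1996differential}; if you want a self-contained fix, the cleanest path is the one implicit in Lemma~\ref{lemma:frt_partition}: use that $x\notin\vk{\interior{S}^c}$ yields a strategy, a horizon $T$, \emph{and a margin} $\epsilon>0$ such that every backward trajectory leaves $\interior{S}^c+B_\epsilon$ within $[-T,0]$; then $\cbf$ attains a positive minimum $\rho$ on the compact set $\{y\mid \operatorname{dist}(y,\interior{S}^c)\ge\epsilon\}\cap \overline{B_{MT}(x)}$, giving $\VV(x)\ge e^{-\discount T}\rho>0$ without any control-compactness argument. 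That margin is exactly what the $\epsilon$-fattening in Definition~\ref{def:vk} buys and what your pointwise formulation of membership in $\frt{\interior{S}}$ lacks.
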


\vspace{-1.5em}
\begin{proof} See Appendix \ref{appendix:lemma_frt_value}. 
\end{proof}
\vspace{-0.75em}

\dLee{
As in Lemma~\ref{lemma:frt_value}, $\frt{\interior{C}}$ is characterized as the 
strict zero-superlevel set of $\VV$ in~\eqref{eq:Def_FRT_valueFunction}. If its boundary is differentiable (Condition~\ref{assumption:S}), then the FRT becomes a control invariant set according to Theorem~\ref{thm:frt}. In such cases, whereas the BRT value function \eqref{eq:brt_v_discounted} flattens to zero within the control invariant set, the FRT value function remains positive inside the invariant set.
}

The fundamental nature of reachability, what is called the ``game-of-kind'' \cite{cardaliaguet1996differential, bansal2017hamilton}---whether or not a state is inside the FRT---remains consistent, since \eqref{eq:FRT_FRTValueFunction} holds for any value of $\discount$ and the property is determined by the positivity of $\VV(x)$. However, the discount factor $\discount$ introduces the ``game-of-degree'' aspect \cite{isaacs1999differential} to the reachability problem . In this game-of-degree, $\discount$ serves as a knob that adjusts how conservative the resulting optimal policy will be. \jcnote{As we will see soon, this knob is shaped by the barrier constraint of the CBFs. This is the unique feature of introducing discount in FRT value functions whereas optimal control policy of a discounted BRT value function is not affected by $\gamma$ within the invariant set due to its flatness \cite{akametalu2018minimum, xue2018reach}.}


The introduction of a discount in \eqref{eq:frt_cost} is similar to introducing a discount to infinite-horizon sum-over-time cost optimal control problems \cite{bardi1997optimal}. In fact, many favorable properties of the value function resulting from the discount, including its Lipschitz continuity and the contraction of the corresponding Bellman backup operator, hold similarly.

\begin{proposition}[Lipschitz Continuity]
\label{prop:lipschitz} Suppose $f$ satisfies Assumption \ref{assumption:dynamics} and $\cbf$ is Lipschitz continuous. $V_\discount$ is Lipschitz continuous in $\R^n$ if $L_f < \gamma$, where $L_f$ is the Lipschitz constant of $f$. 
\end{proposition}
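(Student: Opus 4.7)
The plan is to bound $|V_\gamma(x_1)-V_\gamma(x_2)|$ by fixing the strategy/control pair in the inner optimization, comparing the two resulting cost functionals $J_\gamma$ pointwise, and showing that the exponential discount precisely cancels the Grönwall blow-up of the backward flow. The discount assumption $L_f<\gamma$ will be what keeps this cancellation uniform over the infinite backward time horizon.

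First I would invoke the standard sup/inf contraction inequality: for any pair of functions on the same strategy space, $|\sup_\xi \inf_u A(\xi,u)-\sup_\xi\inf_u B(\xi,u)|\le \sup_{\xi,u}|A(\xi,u)-B(\xi,u)|$. Applied to \eqref{eq:Def_FRT_valueFunction}, this gives
\begin{equation*}
|V_\gamma(x_1)-V_\gamma(x_2)| \le \sup_{\xi_d^-\in\Xi_d^-}\sup_{u^-\in\mathcal{U}^-}\bigl|J_\gamma(x_1,u^-,\xi_d^-)-J_\gamma(x_2,u^-,\xi_d^-)\bigr|.
\end{equation*}
Using the elementary inequality $|\sup_t \phi_1(t)-\sup_t \phi_2(t)|\le \sup_t|\phi_1(t)-\phi_2(t)|$, it suffices to bound, for each fixed $(u^-,\xi_d^-)$ and each $t\in(-\infty,0]$, the quantity $e^{\gamma t}|h_S(x_1^-(t))-h_S(x_2^-(t))|$, where $x_i^-$ denotes the backward trajectory \eqref{eq:backward_dynamics} terminating at $x_i$.

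Second, I would apply Grönwall's lemma to the backward ODE. Because $f(\cdot,u,d)$ is Lipschitz with constant $L_f$ uniformly in $(u,d)$, and because $x_1^-,x_2^-$ are driven by the same $u^-$ and the same disturbance $\xi_d^-[u^-]$, a standard Grönwall argument applied in the reversed time variable yields
\begin{equation*}
\|x_1^-(t)-x_2^-(t)\|\le e^{L_f|t|}\|x_1-x_2\| = e^{-L_f t}\|x_1-x_2\|\qquad\forall t\le 0.
\end{equation*}
Combining this with the Lipschitz continuity of $h_S$ (constant $L_h$) and the discount gives
\begin{equation*}
e^{\gamma t}|h_S(x_1^-(t))-h_S(x_2^-(t))|\le L_h\, e^{(\gamma-L_f)t}\|x_1-x_2\|.
\end{equation*}
This is where the hypothesis $L_f<\gamma$ is essential: since $\gamma-L_f>0$ and $t\le 0$, the exponential factor is bounded by $1$ uniformly in $t$, so the right-hand side is dominated by $L_h\|x_1-x_2\|$.

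Finally, chaining the three estimates yields $|V_\gamma(x_1)-V_\gamma(x_2)|\le L_h\|x_1-x_2\|$ for all $x_1,x_2\in\mathbb{R}^n$, which is exactly the claimed Lipschitz continuity (with constant $L_h$). The only subtle point---and the only step that is more than routine---is verifying that the sup/inf contraction inequality is legitimate when the strategies $\xi_d^-\in\Xi_d^-$ are non-anticipative and act on arbitrary $u^-\in\mathcal{U}^-$; this is standard in the differential-games literature (the strategy space does not depend on the terminal state $x$, so the same $(\xi_d^-,u^-)$ can be used to evaluate both $J_\gamma(x_1,\cdot,\cdot)$ and $J_\gamma(x_2,\cdot,\cdot)$), but I would state it explicitly to keep the argument self-contained. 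The infinite-horizon supremum is handled without further work precisely because $\gamma-L_f>0$ makes the bound uniform in $t$.
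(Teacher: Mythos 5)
Your proposal is correct and follows essentially the same route as the paper's proof: the paper's choice of an $\epsilon$-optimal strategy for $x_1$ and an $\epsilon$-optimal control for $x_2$ is exactly the unrolled proof of the sup/inf contraction inequality you invoke as a lemma, and the subsequent Gr\"onwall estimate combined with $e^{(\gamma-L_f)t}\le 1$ for $t\le 0$ is identical. The resulting Lipschitz constant $L_{h_S}$ also matches.
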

\vspace{-1em}
\begin{proof} See Appendix \ref{appendix:prop_lipschitz}.
\end{proof}
\vspace{-0.75em}

The condition $L_f < \gamma$ implies that $\gamma$ has to be large enough to suppress the effect of the vector field in prohibiting continuity. Under this condition, since the value function is Lipschitz continuous, it is differentiable almost everywhere by Rademacher's Theorem. 

The contraction property of the Bellman backup will be discussed next after introducing the HJ-PDE characterization of $\VV$, which provides a computational machinery for the computation of the value function $\VV$, and also the uniqueness of the PDE solution.

\vspace{-0.25em}
\subsection{Hamilton-Jacobi Characterization}
\label{subsec:frt-hj}
\vspace{-0.25em}

The HJ-PDE underlying the FRT value function $\VV$ is derived by applying Bellman’s principle of optimality to \eqref{eq:Def_FRT_valueFunction}, which results in the following variational inequality.


\begin{theorem}[Forward Reachable Tube Hamilton-Jacobi
Variational Inequality]
\label{thm:HJPDE}
    Suppose $\cbf$ is a bounded and Lipschitz continuous function, and $\discount>0$. $\VV$ in \eqref{eq:Def_FRT_valueFunction} is a unique viscosity solution in $\R^n$ of the following HJ-PDE, called Hamilton-Jacobi forward reachable tube variational inequality (HJ-FRT-VI):
\vspace{-0.75em}
    \begin{align}
        0=\min\Big\{ & \
        \VV(x)- h_C(x), \label{eq:thm_HJPDE} \\
        & \max_u \min_d \frac{\partial \VV}{\partial x}\cdot f(x,u,d) + \discount \VV(x) \Big\}. \nonumber
\vspace{-0.5em}
    \end{align}
\end{theorem}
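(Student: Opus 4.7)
The plan is to verify that $\VV$ is both a viscosity subsolution and a viscosity supersolution of \eqref{eq:thm_HJPDE}, and then to establish uniqueness through a comparison principle. The three main ingredients will be the dynamic programming identity of Theorem \ref{thm:DP}, an Isaacs min-max condition that lets the game value $\sup_{\BWstrategy}\inf_{\BWctrl}$ appearing in the DPP be replaced by the pointwise expression $\max_{u}\min_{d}$ appearing in the PDE, and the strict positivity of the discount $\discount>0$, which supplies the strict monotonicity required in both halves.

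Before addressing the Hamiltonian, I dispose of the obstacle argument of the VI. Plugging $t=0$ into \eqref{eq:frt_cost} shows $\JJ(x,\BWctrl,\BWstrategy)\ge h_S(x)$ for any admissible $\BWctrl$ and $\BWstrategy$, hence $\VV\ge h_S$ pointwise; together with boundedness of $h_S$ this also gives boundedness of $\VV$ and settles the first argument $\VV-h_S\ge 0$ of the supersolution property automatically. For the Hamiltonian, I pick a test function $\varphi\in C^1$ such that $\VV-\varphi$ attains a local extremum at $x_0$ with $\VV(x_0)=\varphi(x_0)$, and apply Theorem \ref{thm:DP} with a small horizon $\tau>0$. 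Using the uniform bound $\norm{f}\le M$ and continuity of $h_S$, the obstacle term $\sup_{t\in[-\tau,0]}e^{\discount t}h_S(\BWtraj(t))$ converges to $h_S(x_0)$ uniformly over admissible controls and strategies, so whenever $\VV(x_0)>h_S(x_0)$ it becomes inactive in \eqref{eq:eq_thm_DP} for all sufficiently small $\tau$, reducing the DPP to $\VV(x_0)=\sup_{\BWstrategy}\inf_{\BWctrl}e^{-\discount\tau}\VV(\BWtraj(-\tau))$. Substituting the local test inequality between $\VV$ and $\varphi$, first-order expanding $\BWtraj(-\tau)=x_0-\tau f(x_0,u,d)+o(\tau)$ together with $e^{-\discount\tau}=1-\discount\tau+O(\tau^2)$, dividing by $\tau$, and letting $\tau\to 0$ yields, via Isaacs' condition to exchange $\sup/\inf$ with pointwise $\max/\min$,
\begin{equation*}
\max_{u}\min_{d}\nabla\varphi(x_0)\cdot f(x_0,u,d)+\discount\varphi(x_0)\le 0
\end{equation*}
in the subsolution case (and the reverse inequality in the supersolution case), completing the viscosity verification.

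For uniqueness I would appeal to a standard comparison principle: if $u_1,u_2$ are bounded viscosity sub- and supersolutions of \eqref{eq:thm_HJPDE}, then $u_1\le u_2$ on $\R^n$. The doubling-of-variables argument applied to $\Phi_\epsilon(x,y)=u_1(x)-u_2(y)-\norm{x-y}^2/(2\epsilon)-\eta(\norm{x}^2+\norm{y}^2)$ with small compactifying penalty $\eta>0$ produces a maximizer $(x_\epsilon,y_\epsilon)$ at which the variational-inequality structure forces one of two cases: either the obstacle is active, i.e., $u_1(x_\epsilon)\le h_S(x_\epsilon)$ while $u_2(y_\epsilon)\ge h_S(y_\epsilon)$, in which case Lipschitz continuity of $h_S$ bounds $u_1(x_\epsilon)-u_2(y_\epsilon)$ by a quantity vanishing with $\epsilon$; or the Hamiltonian inequalities hold at both points with common test covector $p_\epsilon=(x_\epsilon-y_\epsilon)/\epsilon$, and subtracting them using the Lipschitz estimate on $f$ yields $\discount\bigl(u_1(x_\epsilon)-u_2(y_\epsilon)\bigr)\le L_f\norm{x_\epsilon-y_\epsilon}\cdot\norm{p_\epsilon}+O(\eta)\to 0$ as $\epsilon,\eta\to 0$. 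The hardest step is precisely this comparison argument: the two-term VI demands a careful case split at the penalized maximizer, and since continuity of $\VV$ is only guaranteed under the additional hypothesis $L_f<\discount$ of Proposition \ref{prop:lipschitz}, in general one must work with upper and lower semicontinuous envelopes of the solutions. Throughout the proof the discount $\discount>0$ is indispensable---it is precisely the strict monotonicity coefficient that closes the final comparison estimate, and its absence is exactly what prevents the non-discounted formulations of Table \ref{table:comparision_CBF_Reachability} from admitting a unique solution.
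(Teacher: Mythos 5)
Your proposal is correct in outline but takes a genuinely different route from the paper's. The paper does not verify the viscosity property or prove a comparison principle directly: it applies the time-reversal bijections $\rho_\ctrl$ and $\rho_{\strategy}$ (introduced in Appendix~\ref{appen:lemma_proof_FRT_V_relationship}) to rewrite $\VV$ as the forward-in-time discounted value function \eqref{eq:v_differeent_def} for the negated vector field $-f$, invokes \cite[Lemma 3]{xue2018reach}---which already establishes that this discounted infinite-horizon value is the unique viscosity solution of the corresponding variational inequality---and then simply unpacks the sub/supersolution conditions using the identity $-\min_{u}\max_{d}\,p\cdot(-f)=\max_{u}\min_{d}\,p\cdot f$ to arrive at \eqref{eq:thm_HJPDE}. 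Your self-contained argument (DPP of Theorem~\ref{thm:DP} $\Rightarrow$ sub/supersolution verification, then doubling-of-variables comparison) is essentially the machinery underlying that cited lemma, so it buys transparency and independence from the external reference at the cost of redoing work the paper deliberately outsources; you also correctly isolate the two places where $\discount>0$ is indispensable (inactivation of the obstacle for small horizons, and the strict monotonicity closing the comparison estimate). Two cautions. First, passing from the game value $\sup_{\BWstrategy}\inf_{\BWctrl}$ to the pointwise Hamiltonian $\max_u\min_d$ is \emph{not} an application of Isaacs' condition (which asserts $\max_u\min_d=\min_d\max_u$ and is neither assumed nor needed here); it is the Evans--Souganidis verification step \cite{evans1984differential}, which yields the specific ordering $\max_u\min_d$ dictated by the nonanticipative information pattern, so you should replace that invocation accordingly. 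Second, your observation that continuity of $\VV$ is only guaranteed by Proposition~\ref{prop:lipschitz} when $L_f<\discount$ is well taken: for $0<\discount\le L_f$ one needs the weaker H\"older-type continuity of discounted values (this is what \cite{xue2018reach} supplies) before the classical continuous viscosity framework---and hence your comparison argument---applies without resorting to semicontinuous envelopes.
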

\vspace{-1.25em}
\begin{proof} See Appendix \ref{appendix:thm_HJPDE}.
\end{proof}
\vspace{-0.75em}

A strictly positive value of $\discount$ guarantees the boundedness and the uniqueness of the solution of the HJ-FRT-VI. (An example in Appendix illustrates these outcomes.) In fact, the uniqueness property follows from the contraction property of the Bellman backup associated with the dynamic programming principle of $\VV$.

To see this, we define a Bellman backup operator $B_T: \text{BUC}(\R^n)\rightarrow \text{BUC}(\R^n)$ for $T>0$, where $\text{BUC}(\R^n)$ represents a set of bounded and uniformly continuous functions: $\R^n\rightarrow \R$, as
\vspace{-1em}
\begin{align}
    B_T[V](x)\!\coloneqq\!\! \sup_{\BWstrategy\in\BWStrategy}\inf_{\BWctrl\in \BWcfset}\!\!\max\Big\{ & \! \max_{t\in[-T,0]} e^{\discount t}h_S(\BWtraj(t)),  \nonumber \\
    & \!\!\!e^{-\discount T} V(\BWtraj(-T)) \Big\}. \vspace{-1.25em}
    \label{eq:frt-bellman-backup}
\end{align}
Then, the following holds.

\begin{theorem}[Contraction mapping \& Uniqueness]\label{thm:contractionMapping}
    \!For $V^{1}, V^{2}\!\in\!\text{BUC}(\R^n)$, 
\vspace{-0.5em}
    \begin{align}
        \lVert B_T[V^1]-B_T[V^2] \rVert_\infty \leq e^{-\discount T} \lVert V^1-V^2 \rVert_\infty,
\vspace{-0.75em}
    \end{align}
    and the FRT value function $\VV$ in \eqref{eq:Def_FRT_valueFunction} is the \textit{unique} fixed-point solution to $\VV\!=\!B_T[\VV]$ for each $T\!>\!0$. 
    Also, for any $V\!\in\!\textnormal{BUC}(\R^n)$, $\lim_{T\rightarrow \infty} B_T[V] = \VV$.
\end{theorem}

\vspace{-1em}
\begin{proof} See Appendix \ref{appendix:contractionMapping}.
\end{proof}
\vspace{-0.75em}

\jcnote{Theorem~\ref{thm:contractionMapping} also provides various ways to compute the FRT value function $\VV$ using the operation $B_T[\cdot]$, which do not require any assumptions for the initial guess of the value function, besides the boundedness and uniform continuity in $\R^n$. For further details of how Theorem~\ref{thm:contractionMapping} can be utilized for finite-horizon-based computation or value iteration, see Appendix~\ref{appendix:contraction}.}

\subsection{\jcnote{FRT Value Functions and CBFs}}
\label{sec:HJ_FRT_CBF}

We now revisit the forward reachability for control invariant sets, extending the analysis in Section \ref{sec:FRT_CtrlInv}, and draw a connection between the FRT value function $\VV$ and robust CBFs. 




We first derive an optimal policy of $\VV$ from the HJ-FRT-VI \eqref{eq:thm_HJPDE}.

\begin{proposition}[Optimal policy of $\VV$]
\label{thm:opt-policy-frt} Under the assumptions in Theorem~\ref{thm:HJPDE}, we define the set-valued map policy $K_\discount: S \rightarrow 2^U$ as 
\vspace{-0.5em}
\begin{equation}
\label{eq:frt-opt-ctrl}
    \resizebox{0.87\hsize}{!}{$\displaystyle K_\discount(x)\!:=\!\left\{ u\in U: \min_{d\in D} \frac{\partial\VV}{\partial x}\cdot f(x, u, d) + \discount \VV(x) \ge 0 \right\}$}, \vspace{-0.5em}
\end{equation}
where $\VV$ is defined as \eqref{eq:Def_FRT_valueFunction}. Then, $K_\discount(x)$ is non-empty for every $x \in \frt{\interior{C}}$ where $\frac{\partial\VV}{\partial x}$ exists. In addition, if $\VV$ is differentiable, any element of $K_\discount(x)$ is an optimal control input with respect to $\VV$ in \eqref{eq:Def_FRT_valueFunction}. 
\end{proposition}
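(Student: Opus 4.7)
The plan is to verify the three sub-claims in the order they appear. First, for non-emptiness of $K_\discount(x)$ on $\interior{S}$, I will read off the required information directly from the FRT-HJ-VI in Theorem~\ref{thm:HJPDE}. At any $x$ where $\frac{\partial \VV}{\partial x}$ exists, both arguments of the outer $\min\{\cdot,\cdot\}=0$ in \eqref{eq:thm_HJPDE} are non-negative; in particular
\begin{align*}
\max_{u\in U}\min_{d\in D}\frac{\partial \VV}{\partial x}\cdot f(x,u,d)+\discount \VV(x)\;\ge\;0.
\end{align*}
By Assumption~\ref{assumption:dynamics} and compactness of $D$, the map $u\mapsto\min_{d\in D}\frac{\partial \VV}{\partial x}\cdot f(x,u,d)$ is upper semicontinuous, and since $U$ is compact the outer maximum is attained by some $u^{\ast}\in U$. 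This $u^{\ast}$ satisfies the defining inequality of $K_\discount(x)$, so $K_\discount(x)\neq\emptyset$ for every such $x$ (in particular for $x\in\interior{S}$, where $\VV(x)\ge h_S(x)>0$ follows from taking $t=0$ in \eqref{eq:frt_cost}).

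Second, to show that any $u\in K_\discount(x)$ is optimal when $\VV$ is differentiable, I will run a standard viscosity verification argument. A measurable selection $\pi:\interior{S}\to U$ from $K_\discount$ exists since $K_\discount$ has closed, non-empty values with a Hamiltonian description. Applying the classical chain rule to $\VV$ along the backward trajectory $\BWtraj$ driven by $\BWctrl(\tau)=\pi(\BWtraj(\tau))$ against any non-anticipative $\BWstrategy\in\BWStrategy$, and using the second branch of the HJ-VI with equality at interior points where $\VV>h_S$, yields $\frac{d}{d\tau}[e^{\discount \tau}\VV(\BWtraj(\tau))]\le 0$ almost everywhere. Integrating and combining with the dynamic programming principle of Theorem~\ref{thm:DP} shows that $\pi$ attains the saddle-point value $\VV(x)$, and hence any pointwise choice $u\in K_\discount(x)$ is consistent with an optimal feedback.

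Third, for forward invariance of $S$ under $K_\discount$ when $S$ is robustly control invariant, Proposition~\ref{prop:frt-v-ci} gives $\interior{S}=\{x\mid \VV(x)>0\}$. Fix $x\in\interior{S}$ and let $\traj(t)$ be the forward trajectory from $x$ under a feedback $u(t)=\pi(\traj(t))\in K_\discount(\traj(t))$ against any non-anticipative disturbance. The defining inequality of $K_\discount$, together with differentiability of $\VV$, gives
\begin{align*}
\frac{d}{dt}\bigl[e^{\discount t}\VV(\traj(t))\bigr]
= e^{\discount t}\Bigl[\frac{\partial \VV}{\partial x}\cdot f(\traj,u,d)+\discount \VV(\traj)\Bigr]\;\ge\;0
\end{align*}
against the worst-case $d$, so $\VV(\traj(t))\ge e^{-\discount t}\VV(x)>0$, which by Proposition~\ref{prop:frt-v-ci} yields $\traj(t)\in\interior{S}\subseteq S$ for all $t\ge 0$, establishing forward invariance under the worst-case disturbance.

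The main technical obstacle is the middle step: promoting the pointwise Hamiltonian optimality $u\in K_\discount(x)$ to global optimality for the differential game. This requires a measurable selection from the set-valued $K_\discount$ so that the closed-loop ODE is well-posed, correct handling of the asymmetry built into the non-anticipative strategies $\BWStrategy$ (the game has a lower-value structure), and reliance on global differentiability of $\VV$ so that the classical chain rule applies — Proposition~\ref{prop:lipschitz} on its own only delivers Lipschitz continuity. These issues are precisely why the statement assumes differentiability of $\VV$ for the optimality part while demanding only pointwise differentiability at $x$ for non-emptiness.
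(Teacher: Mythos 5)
Your overall structure is sound, and your route for the optimality claim is genuinely different from the paper's. The paper does not run a trajectory-based verification argument: it introduces a policy-evaluation variational inequality $0=\min\{V^\pi_\discount(x)-h_S(x),\ \min_d \frac{\partial \VV}{\partial x}\cdot f(x,\pi(x),d)+\discount V^\pi_\discount(x)\}$, observes that for fixed $x$ the right-hand side is strictly increasing in the scalar unknown so the equation has a unique solution, then sandwiches using the FRT-HJ-VI \eqref{eq:thm_HJPDE} together with $\pi(x)\in K_\discount(x)$ to conclude $V^\pi_\discount\equiv\VV$, and finally identifies $V^\pi_\discount$ with the value under $\pi$ and worst-case disturbance. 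That PDE-level argument sidesteps exactly the obstacles you flag (measurable selection, well-posedness of the closed-loop ODE, chain rule along a feedback trajectory), at the price of asserting rather than deriving that the solution of the policy equation is the closed-loop cost. Your first step matches the paper's (with more care about attainment of the maximum), and your third step (forward invariance via $\frac{d}{dt}[e^{\discount t}\VV(\traj(t))]\ge 0$ and Proposition \ref{prop:frt-v-ci}) is correct and in fact supplies an argument the paper's appendix leaves implicit.

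The one concrete error is the sign in your middle step. For $u\in K_\discount(x)$ the pointwise inequality is $\frac{\partial\VV}{\partial x}\cdot f+\discount\VV\ge 0$ for every $d\in D$, so along the closed-loop backward trajectory the same computation you perform in your third step gives $\frac{d}{d\tau}\bigl[e^{\discount\tau}\VV(\BWtraj(\tau))\bigr]\ge 0$ on $(-\infty,0]$, not $\le 0$. The correct sign is the one you actually need: monotonicity yields $e^{\discount t}\VV(\BWtraj(t))\le\VV(x)$ for $t\le 0$, hence $e^{\discount t}h_S(\BWtraj(t))\le\VV(x)$ by the first branch of \eqref{eq:thm_HJPDE}, hence $\JJ(x,\BWctrl,\BWstrategy)\le\VV(x)$ in \eqref{eq:frt_cost} for the closed-loop control against every $\BWstrategy\in\BWStrategy$; combined with $\VV(x)=\sup_{\BWstrategy}\inf_{\BWctrl}\JJ\le\sup_{\BWstrategy}\JJ$ evaluated at the closed-loop control, this closes the verification. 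As written, your $\le 0$ would produce a lower bound on the cost, which is the wrong direction and the argument would not close. With the sign corrected (and the selection/well-posedness caveats you already acknowledge handled), your proof goes through.
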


\vspace{-1em}
\begin{proof} See Appendix \ref{appendix:opt-policy-frt}.
\end{proof}
\vspace{-0.5em}

Note that the non-emptiness of $K_\discount(x)$ is derived from $\VV$ satisfying the HJ-FRT-VI \eqref{eq:thm_HJPDE}. By noting that the second term of the minimum in \eqref{eq:thm_HJPDE} has to be non-negative for \eqref{eq:thm_HJPDE} to hold, we get that 
\vspace{-0.5em}
\begin{equation}\label{eq:FRT_CBFineq}
    \max_{u\in U}\min_{d\in D} \frac{\partial\VV}{\partial x}\cdot f(x, u, d) + \discount \VV(x) \ge 0 \vspace{-0.25em}
\end{equation}
at every $x\in\R^n$ where $\VV$ is differentiable. This corresponds to the \cbfconstraint~in \eqref{eq:rCBF_ctrl_condition_original} where $\alpha(y)=\discount y$.

Since the value function is Lipschitz continuous and differentiable almost everywhere by Proposition \ref{prop:lipschitz}, $\VV$ satisfies \eqref{eq:FRT_CBFineq} almost everywhere in $\frt{\interior{C}}\subset\R^n$. Note that $\VV$ is 0 everywhere outside $\frt{\interior{C}}$. 
If  $\VV$ is differentiable in the FRT, \eqref{eq:FRT_CBFineq} is satisfied everywhere in FRT, which constitutes the robust CBF in Definition \ref{def:rcbf}:

\begin{corollary}
\label{cor:frt_is_cbf}
If $\VV$ is continuously differentiable in $\frt{\interior{C}}$, $\VV$ is a robust CBF, and $\frt{\interior{C}}$ becomes a robust control invariant set.
\end{corollary}


\noindent Next, any valid robust CBF $\cbf$ itself is the FRT value function in $\interior{S}$:

\begin{theorem}[Inverse optimality of CBFs]
\label{thm:cbf_is_frt} 
Let $\cbf:\R^n \rightarrow \R$ be a robust CBF for a closed set $S$, satisfying 
\vspace{-0.5em}
    \begin{equation}\label{eq:CBF_ineqality}
        \max_{u \in U} \min_{d \in D} \dcbf (x) \cdot f(x, u,d) + \discount h(x) \ge 0, \vspace{-0.5em}
    \end{equation}
    for all $x\in S$ and some $\discount > 0$.
    Then,  \vspace{-0.5em}
    \begin{equation}
        \label{eq:cor_cbf_is_frt_eq2} \VV(x) = \max\{0, \cbf(x)\}  \vspace{-0.5em}
    \end{equation}
    is the unique viscosity solution of the HJ-FRT-VI \eqref{eq:thm_HJPDE} with $h_C(x) = \cbf(x)$.
\end{theorem}
\vspace{-1.5em}
\begin{proof} See Appendix \ref{appendix:cbf_is_frt}.
\end{proof}

Corollary~\ref{cor:frt_is_cbf} and Theorem~\ref{thm:cbf_is_frt} establish a tight theoretical linkage between HJ reachability analysis and CBFs, wherein the role of discount factor is crucial. By introducing the discount factor to the reachability formulation, the value function becomes a CBF-like function in that it satisfies the \cbfconstraint~almost everywhere in the set $S$, and in the best case when it is differentiable, it becomes the CBF. Moreover, by Proposition \ref{thm:opt-policy-frt}, the \cbfconstraint~defines the optimal policy. On the other hand, by Theorem~\ref{thm:cbf_is_frt}, any CBF can be interpreted as an FRT value function with a discount factor. 
Thus, the inverse optimal control principle \cite{ABAZAR2020119} underlying the \cbfconstraint~and the CBF is the discounted forward reachability. The discounted FRT cost function \eqref{eq:frt_cost} is the inverse optimal cost that characterizes the CBF itself as the value function, and any control satisfying the \cbfconstraint~as the corresponding optimal control. In \cite{krstic2021inverse}, the inverse optimality of CBF-based min-norm controllers has been investigated as an infinite-horizon running cost problem; however, this work does not capture the inverse optimality of the CBF itself.

Finally, let $W$, a continuously differentiable function, satisfy
\begin{enumerate}
    \item\; \vspace{-0.5em}
    \begin{equation}
    W(x) \ge h_C(x), \label{eq:Wh}        
    \end{equation}
    \item \;\vspace{-1em}
    \begin{equation}
    \max_{u\in U} \min_{d \in D} \nabla W(x) \cdot f(x, u, d) + \gamma W(x) \ge 0,      \label{eq:barrier-W}
    \end{equation}
\end{enumerate}
or, equivalently, it satisfies the variational inequality
\begin{equation}
\resizebox{0.99\hsize}{!}{$\displaystyle\min\Bigl\{W(x)-h_C(x),\ \max_{u\in U} \min_{d \in D} \nabla W(x) \cdot f(x, u, d) + \gamma W(x) \Bigr\} \ge 0$},
\label{eq:frt-pde-w}    
\end{equation}
for all states $x$ at which $W(x) \ge 0$. Such $W$ is a supersolution of the HJ-FRT-VI \eqref{eq:thm_HJPDE}.

Such $W$ is a valid control barrier function whose zero-superlevel set, $C_W :=\{x\;|\; W(x) > 0\}$, is a superset of $C$. Then, we can prove that everywhere $W(x)$ is positive, $W(x)$ is lower-bounded by $V(x)$, thus resulting in $C_W$ to be a superset of $\frt{\interior{C}}$:



\begin{theorem}$\VV(x) \le W(x)$ for all $x$ such that $\VV(x) > 0$ or $W(x) > 0$ and $\frt{\interior{C}} \subset C_W$.
\label{thm:frt-cbf-main}
\end{theorem}
\begin{proof} By Theorem~\ref{thm:cbf_is_frt}, for all $x$ such that $\VV(x) > 0$ or $W(x) > 0$, we get
\begin{align*}
    \VV(x) & = \sup_{\BWstrategy\in\BWStrategy}\inf_{\BWctrl\in\BWcfset} \sup_{t\in(-\infty, 0]} e^{\discount t} h_C(\BWtraj(t)) \\
    & \le \sup_{\BWstrategy\in\BWStrategy}\inf_{\BWctrl\in\BWcfset} \sup_{t\in(-\infty, 0]} e^{\discount t} W(\BWtraj(t)) \quad (\text{by}\; \text{\eqref{eq:Wh}}) \\
    & = \max\big\{0, W(x) \big\} \quad (\text{by}\; \text{Theorem~\ref{thm:cbf_is_frt}}) \\
    & = W(x) \quad (\text{when}\; \VV(x) > 0 \; \text{or} \; W(x) > 0) \vspace{-1.5em}
    \end{align*}    
\vspace{-1em}
\end{proof}
\vspace{-0.5em}


Theorems~\ref{thm:cbf_is_frt} and \ref{thm:frt-cbf-main} establish a tight connection between CBFs and FRTs:
\begin{remark}
\label{rmk:supersoln_cbf}
    \begin{enumerate}
        \item The robust control invariant set characterized by any CBF, $C_W$, is an FRT. 
        In particular, this invariant set is the FRT of itself, i.e., $\mathrm{FRT}(C_W) = C_W$ (by Corollary~\ref{cor:frt_corollary}), 
        and there may exist strict subsets $C$ whose FRT coincides with $C_W$.

        \item For any CBF $W$ and any subset $C \subseteq C_W$, $W$ is a supersolution to the HJ-FRT-VI associated with $h_C$ satisfying $h_C(x) \le W(x)$.
    \end{enumerate}
\end{remark}

Theorem~\ref{thm:frt-cbf-main} and Remark~\ref{rmk:supersoln_cbf} imply that any continuously differentiable supersolution of the HJ-FRT-VI is equivalent to a CBF. 
Therefore, general computational frameworks for supersolutions can serve as frameworks for generating CBFs, which we explore next. 
Through our formulation, we can design a robust control invariant set that outer-approximates the FRT, even when the initial set $C$ is not robustly control invariant.

\section{\dLee{Learning-based approach for FRT-CBF}}
\label{sec:frt-cbf-learning}
We now propose a method that uses a neural network to learn a CBF whose control invariant set lies between the given constraint set $X$ and the initial set $C$. This framework is constructed by the mechanisms of Theorems~\ref{cor:overapproximation} and \ref{thm:frt-cbf-main} which verifies the learned control invariant set and the CBF as an outer- and over-approximation of the FRT and the FRT value function, respectively. We consider the CBF learned through this framework the \textit{neural forward reachable tube control barrier function} (FRT-CBF) of the initial set $C$.

\begin{assumption} The given initial set $C$ is a subset of the viability kernel of the constraint set $X$.
\end{assumption}

The neural network we learn outputs two scalar functions, $u_\theta(x)$ and $v_\theta(x)$. Our objective is to learn the FRT-CBF, represented as
\begin{equation}
    W_\theta(x) := h_C(x) + u_\theta^2(x).
    \label{eq:w_represent}
\end{equation}
By taking this representation, we can guarantee $W_\theta (x) \ge h_C (x)$ as a hard constraint.

The learning consists of three objectives:
\begin{enumerate}
    \item Learn $W_\theta(x)$ that satisfies \eqref{eq:frt-pde-w}, so that according to Theorem~\eqref{thm:frt-cbf-main}, $C_W$ includes the minimal FRT.
    \item Ensure that
    \[
    W_\theta(x) \le h_{X}(x)
    \]
    is satisfied, so that the learned control invariant set $C_W$ is a subset of the constraint set $X$.
    \item Ensure the control invaraince condition, i.e., 
    \begin{equation}
    \max_{u \in U} \min_{d \in D} \frac{\partial W_\theta}{\partial x}(x) \cdot f(x, u, d) \ge 0,
    \vspace{-0.5em}
    \end{equation}
    for all $x \in \partial C_W$, to satisfy the condition in Lemma~\ref{lemma:tangential}.
\end{enumerate}

To ensure the second condition, we use the second output of the neural network, $v_\theta(x)$, and impose a condition that
\begin{equation}
    W_\theta(x) = h_C(x) + u_\theta^2(x) = h_X (x) - v_\theta^2(x) \le h_X (x).
    \label{eq:uv_condition_original}
\end{equation}

In summary, because $u_\theta(x)^2 \ge 0$ and $v_\theta(x)^2 \ge 0$, this parameterization enforces the relation
\begin{equation}
h_C(x) \le W_\theta(x) \le h_X (x),    
\label{eq:set_inclusion_cond}
\end{equation}
and thus
\[
C \subset C_W \subset X.
\]

\subsection{Related Work}
Several recent works are closely related to the proposed training formulation. A number of studies apply the BRT-based CBVF formulation (in Section~\ref{sec:brt}) to learn CBFs using neural networks \cite{manda2024domain, wang2024simultaneous, kim2025reachability}. These approaches are largely inspired by physics-informed machine learning for approximating solutions of PDEs, particularly HJ-PDEs arising in reachability analysis, as in \cite{bansal2021deepreach}. Another line of work draws inspiration from backward reachability and learn CBFs that enlarge the control invariant set through Bellman backup-type updates \cite{so2024train}. 

These bodies of work may suffer from issues such as non-uniqueness, unboundedness, or discontinuity if the discount factor is not treated carefully. Although maximal invariant sets are an appealing property of backward reachability-based frameworks, our theory suggests that such formulations exhibit a fundamental discrepancy with well-behaved CBFs (e.g., Condition~\ref{assumption:hS}). This discrepancy can be addressed by instead adopting a forward reachability perspective.

Some works additionally incorporate pre-verified safe regions, similar to the initial set used in our formulation \cite{dawson2023safe, wang2018permissive}. These predefined safety regions typically serve as an initial guess for the CBF or as a constraint that prevents the learned invariant sets from becoming overly conservative. However, these works do not explicitly interpret the underlying problem through the lens of forward reachability. 

Our work is related to these approaches in that we also use a neural network to approximate a supersolution of the HJ-FRT-VI, and the resulting loss function designs share similarities with those used in several neural CBF learning methods. However, crucially, we explicitly formulate the objective as learning a CBF whose zero-superlevel set outer-approximates the forward reachable tube (FRT) of a given initial set. 





\subsection{Loss Function}

At each state sample $x_i$, we consider the following three loss terms for the learning.

1. \textit{HJ-FRT-VI Supersolution Loss:} We evaluate the violation of the condition \eqref{eq:frt-pde-w} with
\begin{align*}
    \resizebox{1.00\hsize}{!}{$\displaystyle\ell_{\mathrm{frt}}(x_i )
    = \min\Bigl\{
        \max_{u \in U} \min_{d \in D} \frac{\partial W_\theta}{\partial x}(x_i) \cdot f(x_i, u, d) + \gamma W_\theta (x_i),
        \; 0
    \Bigr\}^2.$}
\end{align*}
This penalizes the violation of the inequality \eqref{eq:barrier-W} that $W_\theta$ must satisfy. Note that \eqref{eq:Wh} is already guaranteed by the representation \eqref{eq:w_represent}. This term is active only when $W(x_i) \ge 0$.

2. \textit{Set Inclusion Condition Loss:} The condition \eqref{eq:uv_condition_original} becomes equivalent to learning $u_\theta$ and $v_\theta$ such that
\begin{equation}
    u_\theta^2(x) + v_\theta^2(x) = h_X (x) - h_C(x).
    \label{eq:uv_condition}
\end{equation}
We evaluate the violation of the constraint \eqref{eq:uv_condition} with
\begin{align*}
    \resizebox{1.00\hsize}{!}{$\displaystyle\ell_{\mathrm{inclusion}} (x_i)
    = \left(
            (h_X (x_i) - h_C(x_i)) - \bigl(u_\theta (x_i)^2 + v_\theta (x_i)^2\bigr)
        \right)^2$}
\end{align*}
so that the condition \eqref{eq:set_inclusion_cond} is satisfied.

\begin{figure*}
\centering
\includegraphics[width=\textwidth]{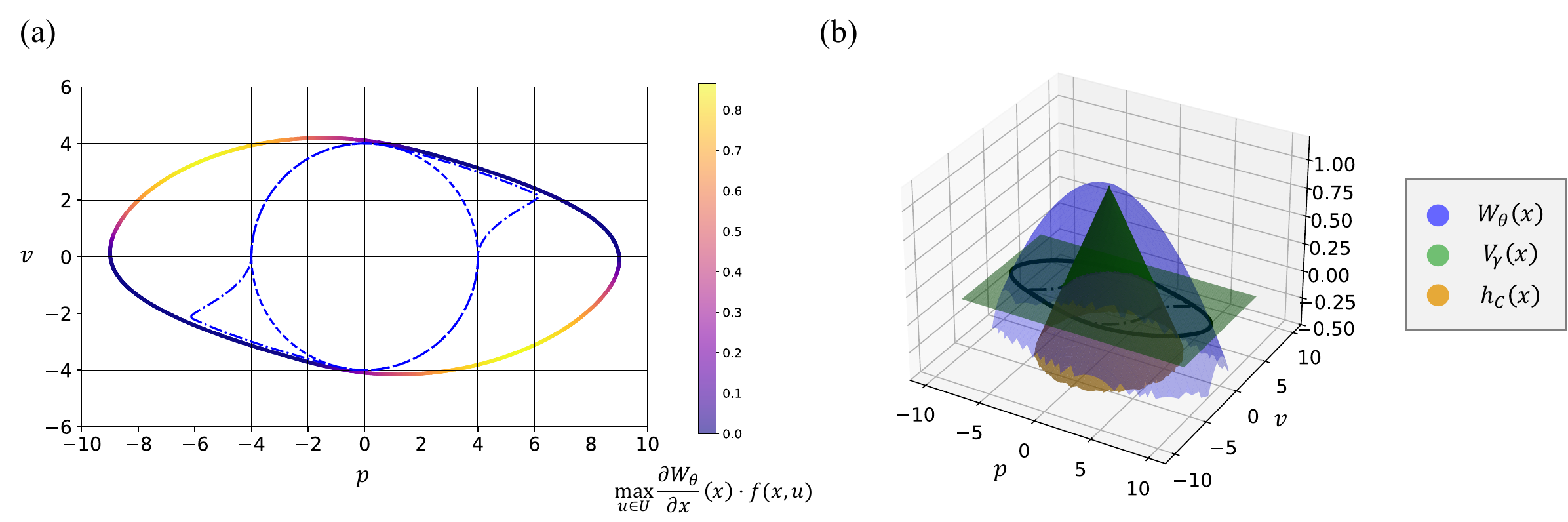}
\vspace{-1em}
\caption{\textbf{(a)} Boundary of the learned FRT-CBF, which includes the initial set $C$ (dashed), and the $\frt{\interior{C}}$ (dash-dotted). The color of the boundary indicates the value of $\max_{u \in U} \min_{d \in D} \frac{\partial W_\theta}{\partial x}(x) \cdot f(x, u, d)$ which must be nonnegative for control invariance. The minimum value occured is -2.07e-4. \textbf{(b)} The learned $W_\theta(x)$ which is continuously differentiable, in comparison to $\VV(x)$ and $h_C$, illustrating the relationship in Theorem~\ref{thm:frt-cbf-main}.}
\label{fig:learned_di_result}
\vspace{-.5em}
\end{figure*}

3. \textit{Control Invariance Loss:}
We sample states near the boundary of $C_W$, finding a sample that satisfies
\begin{equation}
    0 \le W_\theta (x_i) \le \epsilon, \label{eq:boundary_range}
\end{equation}
where $\epsilon$ is set as a small threshold. We employ rejection sampling to sample the states that satisfy this condition. For these states, we evaluate the following loss
\begin{align*}
    \ell_{\mathrm{ci}} (x_i)
= \min\Bigl\{
        \max_{u \in U} \min_{d \in D} \frac{\partial W_\theta}{\partial x}(x_i) \cdot f(x_i, u, d),
        \; 0
    \Bigr\}^2.
\end{align*}
This penalizes the state that does not satisfy the tangential condition in Lemma~\ref{lemma:tangential}.

The final learning objective is
\begin{align}
    \mathcal{L}
    = & \; w_1 \mathbb{E}_{x \in \{x | W_\theta(x) \ge 0 \}} [\ell_{\mathrm{frt}}(x_i )] + w_2 \mathbb{E}_{x} [\ell_{\mathrm{inclusion}}(x_i )]  \nonumber \\
    & + w_3 \mathbb{E}_{x \in \{x | 0 \le W_\theta(x) \le \epsilon \}} [\ell_{\mathrm{ci}}(x_i )]. 
\end{align}

\subsection{Neural Network \& Training Procedure} 

The model is a multilayer perceptron with input dimension $n$, which takes in normalized state inputs, and output dimension $2$, outputing $(u_\theta(x), v_\theta(x))$. We use the softplus activation,
\begin{align}
    \sigma(z) = \frac{1}{\beta}\log\!\bigl(1 + e^{\beta z}\bigr),
\end{align}
with $\beta > 0$. This results in $u_\theta, v_\theta$ to be continuously differentiable functions with differentiable level sets, and the maximum curvature can be controlled by choosing small enough $\beta$ \cite{srinivas2022efficient}.

By the chain rule, we compute the gradient of $W_\theta$ as
\begin{align}
    \frac{\partial W_\theta(x)}{\partial x}
    &=     \frac{\partial h_C(x)}{\partial x} + 2 u_\theta(x)     \frac{\partial u_\theta(x)}{\partial x},
\end{align}
where $\frac{\partial u_\theta(x)}{\partial x}$ is evaluated based on auto-differentiation.

\begin{remark}
    Since we represent the CBF as \eqref{eq:w_represent}, for $W_\theta$ to satisfy Condition~\ref{assumption:hS}, we need to use $h_C$ that also satisfies Condition~\ref{assumption:hS}, meaning that our framework is currently limited to initial sets with a differentiable boundary. Similarly, the constraint target function $h_X$ also has to be a continuously differentiable function in order to satisfy \eqref{eq:uv_condition_original}. In practice, one can smoothen the given non-smooth initial and constraint set target functions. 
\end{remark}

\textit{Training details. }We use a multilayer perceptron with three hidden layers with 512 features, and $\beta = 10$. Each batch for one training iteration is generated by drawing 40,000 i.i.d. samples uniformly from the state range $[-10,10]\times[-10,10]$, and 5,000 samples satisfying \eqref{eq:boundary_range} to evaluate the control invariance loss. We use the loss weights $w_1\!=\!w_2\!=\!w_3\!=\!1$. We used the Adam optimizer with learning rate $\eta = 10^{-5}$, and run training for 10,000 iterations. The training takes around 30 minutes on NVIDIA RTX 5090 GPU.

\subsection{Result}

We revisit the double integrator example in Section 3.3, and focus on Case (a) where the chosen initial set $C$ is \textit{not} control invariant.

The target function of the initial set $h_C$ is a circular quadratic function centered at the origin with radius $r=4$:
\[
h_C(x) = 1 - \frac{\norm{x}^2}{r^2}
\]
The constraint function $h_X$ is set as a circular quadratic function centered at the origin with radius $R=9$:
\begin{equation*}
    h_X(x) = 1 - \frac{\norm{x}^2}{R^2}.
\end{equation*}
The training loss at the last iteration of the training is 2.25e-7, implying that the CBF is learned successfully with negligible errors. The learned control invariant set and the FRT-CBF is visualized in Figure \ref{fig:learned_di_result}. Together, we also visualize the FRT value function $\VV(x)$, computed by the finite horizon-based method described in Appendix~\ref{appendix:contraction}. We used $\gamma=2$ for both computations.

\vspace{-0.5em}
\section{Discussion}
\label{sec:discussion}
\vspace{-0.5em}




Although inspired by existing works that establish the connection between reachability and robust control invariance, our work is the first paper that connects \textit{forward} reachability to the analysis of robust control invariance. Existing works have focused on backward reachability and proposed various value functions that characterize the inevitable BRT of the failure region, for both finite time \cite{choi2021robust, lygeros2004reachability, Mitchell2005, fisac2015reach} and infinite time \cite{Fisac2018, xue2018reach,akametalu2018minimum, tonkens2022refining}. However, none of these functions are CBFs, as discussed in Section \ref{sec:brt} and summarized in Table \ref{table:comparision_CBF_Reachability}. \jcnote{The main issue with these formulations is that the value function is either flat inside the invariant set (when $\lambda > 0$ in \eqref{eq:brt_v_discounted}), or can become discontinuous (when $\lambda \le 0$), both of which make its use as a CBF infeasible.}

Our formulation of forward reachability ensures compliance with the \cbfconstraint~in the CBF definition. Additionally, the value function is both continuous and bounded in $\R^n$, while the corresponding HJ-PDE has a unique solution. The central idea behind our approach is the usage of the discount factor backward in time in the definition of the discounted FRT value function \eqref{eq:Def_FRT_valueFunction}. In contrast to the discount in BRT formulations leading to the emergence of $-\gamma V_\gamma(x)$ in the corresponding HJ-PDEs (Table \ref{table:comparision_CBF_Reachability}, first row), the usage of discount in this way leads to the emergence of positive $\gamma V_\gamma(x)$ term in the HJ-FRT-VI \eqref{eq:thm_HJPDE}, and thus the satisfaction of the \cbfconstraint~(Table \ref{table:comparision_CBF_Reachability}, last row). Moreover, $e^{\gamma t}$ vanishes as $t$ approaches $-\infty$, thereby ensuring continuous, bounded value functions and the solution uniqueness of the HJ-FRT-VI, resulting from the contraction mapping property outlined in Section \ref{subsec:frt-hj}. 


\vspace{-0.5em}
\section{Conclusions}
\label{sec:conclusion}
\vspace{-0.5em}

In this study, we have presented a framework that establishes a strong linkage between reachability, control invariance, and Control Barrier Functions (CBFs) through a Hamilton-Jacobi differential game formulation. Two main aspects of our approach are the use of forward reachability concept in lieu of backward reachability, and the incorporation of a discount factor in the value function. These elements induce a contraction in the Bellman backup of the value function, thereby shaping it to satisfy the barrier constraint of the CBFs. Importantly, we note that prior formulations relying on backward reachability were unable to establish this connection between reachability, control invariance, and CBFs. Thus, our work fills a crucial gap in the existing literature, presenting a new perspective on the interplay among these key concepts in the safe control literature through the lens of forward reachability. The emergence of the barrier constraint in the forward reachability formulation opens new avenues for constructing CBFs grounded in reachability analysis.

\jcnote{One salient condition underlying our study is the differentiability of the boundaries of control invariant sets. Although this condition is required for the zero-superlevel sets of any valid CBFs, recent extensions of CBFs can alleviate this condition \cite{cortes2024mandalay, hirsch2025viscosity}. A deeper understanding of the implications of this condition is crucial for broadening the applicability of our results. For example, the relevance of such regularity properties of the boundary of the safe set for stochastic systems is elucidated in \cite{buckdahn2019viability}. Finally, application of the proposed neural FRT-CBF framework to more practical high-dimensional systems is left as an important future research direction.}

\section{Appendix}

\subsection{Proof of Theorem \ref{thm:frt}}
\label{appendix:thm_frt}

For the proof of Theorem~\ref{thm:frt}, we have to reason about the backward flow of the dynamics, as in \eqref{eq:backward_dynamics}. Thus, we first consider the notion of backward control invariance, the mirrored version of the forward control invariance.

\begin{definition}[Robustly \textit{backward} control invariant] A set $S\in \R^n$ is robustly \textit{backward} control invariant (under \eqref{eq:dynamics}) if for all $x \in S$, for all $\BWstrategy\in\BWStrategy$, for any $\epsilon > 0$ and time $T > 0$, there exists a backward control signal $\BWctrl\in\BWcfset$ such that $\BWtraj(t)\in S + B_\epsilon$ for all $t\in[-T,0]$, where $\BWtraj$ solves \eqref{eq:backward_dynamics}.
\end{definition}

\vspace{-0.5em}

Put simply, backward control invariant sets are forward control invariant under the negated dynamics (where the time flows inversely). We can characterize the backward control invariant sets similarly to Lemma~\ref{lemma:tangential}:


\begin{corollary}
\label{corollary:tangential_backward}
A set $S$ satisfying Condition~\ref{assumption:S} is robustly backward control invariant if and only if for all $x\in \partial S$,
\vspace{-1em}
\begin{equation}
\label{eq:nrci_tangent}
    \exists u \in U ~\textnormal{s.t.}~ -\dcbf(x) \cdot f(x, u, d) \ge 0 \; \forall d\in D,
\vspace{-1em}
\end{equation}
for $h_S$ satisfying Condition~\ref{assumption:hS}.
\end{corollary}

\noindent By combining Lemma~\ref{lemma:tangential} and Corollary~\ref{corollary:tangential_backward}, we draw a connection between forward and backward invariant sets.

\begin{lemma}\label{lemma:positive_negative} 
A set $S$ satisfying Condition~\ref{assumption:S} is robustly forward control invariant if and only if $\interior{S}^c$ is robustly backward control invariant.
\end{lemma}

\vspace{-1em}

\begin{proof} 
By Lemma~\ref{lemma:tangential}, $S$ is robustly forward control invariant if and only if for all $x\in \partial S$, \eqref{eq:rci_tangent} is satisfied. Note that $\partial S = \partial \interior{S}^c$ and $\interior{S}^c$ and $h_{\interior{S}^c}:=-\cbf$ also satisfy Conditions~\ref{assumption:S} and \ref{assumption:hS}, respectively. Since $\frac{\partial h_{\interior{S}^c}}{\partial x}(x) = - \frac{\partial \cbf}{\partial x}(x)$, \eqref{eq:rci_tangent} is equivalent to
\begin{equation}
\label{eq:rci_tangent_negative}
    \resizebox{0.87\hsize}{!}{$\displaystyle\exists u \in U ~\textnormal{s.t.}~ \frac{\partial h_{\interior{S}^c}}{\partial x}(x) \cdot \left(-f(x, u, d)\right) \ge 0 \; \forall d\in \dset.$}
\end{equation}
By applying Corollary~\ref{corollary:tangential_backward}, $\interior{S}^c$ is robustly backward control invariant if and only if $\forall x\!\in\! \partial S$, \eqref{eq:rci_tangent_negative} holds.
\end{proof}

\vspace{-0.5em}

\noindent In Lemma~\ref{lemma:positive_negative}, Condition~\ref{assumption:S} guarantees that, for a state $x_1$ on the boundary of $S$, if there exists a particular control $u_1$ such that $f(x_1,u_1,d)$ points inward to $S$ for all $d\in\dset$, $-f(x_1,u_1,d)$ points outwards to $S$ for all $d\in\dset$.

Next, we introduce the concept of a viability kernel under the backward dynamics:

\begin{definition} \label{def:vk}
A viability kernel of a closed set $C \subset \R^n$ under the backward dynamics \eqref{eq:backward_dynamics}, is defined as:
{\small
\vspace{-1.75em}
\begin{align*}
\label{eq:vk}
\begin{split}
     \vk{C} & \coloneqq \{  x \in \R^n ~|~\forall \BWstrategy\in\BWStrategy,\epsilon>0,T>0,  \exists \BWctrl \in \BWcfset \text{ s.t. } \\
    &\!\!\!\! \forall t\in[-T,0], \BWtraj(t)\in C + B_\epsilon, \textrm{ where } \BWtraj \textrm{ solves } \eqref{eq:backward_dynamics}. \}
\end{split}
\end{align*}}
\vspace{-1.75em}
\end{definition}

This definition applies the concept of \textit{leadership kernel} in \cite{cardaliaguet1996differential} to the backward dynamics. Given any closed set $C\subset\R^n$, the $\vk{C}$ is the largest robustly backward invariant set in $C$. 

The FRT and the viability kernel under the backward dynamics have the following complement property.
\begin{lemma} 
\label{lemma:frt_partition}
For any open set $C \subset \R^n$, the state space $\R^n$ can be partitioned into $\frt{C}$ and $\vk{C^c}$, i.e.,
\vspace{-0.5em}
\begin{equation}
\label{eq:frt_partition}
\{\frt{C}\}^c = \vk{C^c}.
\vspace{-0.5em}
\end{equation}
Furthermore, $\vk{C}$ is always a closed set and $\frt{C}$ is always an open set.
\end{lemma}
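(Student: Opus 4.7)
The plan is to reduce the lemma to three pieces: (a) $\vk{C}$ is closed whenever $C$ is closed; (b) the complement identity $\{\frt{C}\}^c = \vk{C^c}$ for open $C$; (c) openness of $\frt{C}$ as an immediate corollary of (a) and (b), since $C^c$ is closed when $C$ is open.

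For (a), I would take a sequence $x_n \to x$ with $x_n \in \vk{C}$, fix an arbitrary triple $(\BWstrategy, \epsilon, T)$, and apply the definition of $\vk{C}$ to each $x_n$ with slack $\epsilon/2$ to obtain a backward control $\BWctrl_n$ whose trajectory from $x_n$ remains in $C + B_{\epsilon/2}$ on $[-T, 0]$. By Gronwall's inequality under the Lipschitz and boundedness bounds of Assumption \ref{assumption:dynamics}, the trajectory starting from $x$ under the same pair $(\BWctrl_n, \BWstrategy[\BWctrl_n])$ differs from the one from $x_n$ by at most $\lVert x - x_n \rVert e^{L_f T}$ uniformly in $t$; for $n$ sufficiently large this gap is below $\epsilon/2$, so the $x$-trajectory lies in $C + B_\epsilon$, yielding $x \in \vk{C}$.

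For (b), the inclusion $\{\frt{C}\}^c \subseteq \vk{C^c}$ is immediate, since any trajectory avoiding the open set $C$ on $[-T, 0]$ lies in $C^c \subseteq C^c + B_\epsilon$ for every $\epsilon > 0$. For the converse, I would fix $x \in \vk{C^c}$ together with $(\BWstrategy, T)$ and, for each $n$, select $\BWctrl_n$ whose trajectory $\BWtraj_n$ stays in $C^c + B_{1/n}$ on $[-T, 0]$. The uniform bound $\lVert f \rVert \le M$ makes the family $\{\BWtraj_n\}$ equi-Lipschitz, so Arzela-Ascoli yields a uniformly convergent subsequence with limit $\BWtraj^*$. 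Since $C$ is open, every $C^c + B_\epsilon$ is closed, and hence $\BWtraj^*(t) \in \bigcap_{\epsilon > 0}(C^c + B_\epsilon) = C^c$ for all $t \in [-T, 0]$. The remaining step is to produce an admissible backward control $\BWctrl^*$ under which $\BWtraj^*$ is exactly the trajectory paired with $\BWstrategy[\BWctrl^*]$; I would obtain this by invoking the standard closure property of admissible trajectory sets in non-anticipative differential games (Filippov-type selection for the associated differential inclusion combined with weak-$\ast$ compactness of $L^\infty([-T,0];U)$).

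The main obstacle will be the realizability step at the end of (b). Extracting a uniform limit $\BWtraj^*$ lying in $C^c$ is routine, but certifying that $\BWtraj^*$ is itself a controlled trajectory under the fixed non-anticipative strategy $\BWstrategy$ is delicate, because each $\BWctrl_n$ induces a different companion disturbance $\BWstrategy[\BWctrl_n]$, so the limiting control must remain compatible with $\BWstrategy$. Within the Cardaliaguet-style framework the paper adopts, this closure follows from convexity of the velocity set $\{f(x, u, d) : u \in U, d \in D\}$ or, equivalently, a relaxed-control chattering argument, and I would cite rather than reprove it. Part (c) then follows at once: $C^c$ closed implies $\vk{C^c}$ closed by (a), whence $\frt{C} = \{\vk{C^c}\}^c$ is open.
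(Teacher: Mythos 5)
Your decomposition is sound and, in two of its three parts, you end up doing strictly more work than the paper does. The paper's own proof is three sentences: it asserts that the complement identity \eqref{eq:frt_partition} ``follows directly from Definitions \ref{def:frt} and \ref{def:vk},'' cites \cite{cardaliaguet1996differential} for the closedness of $\vk{C}$, and deduces openness of $\frt{C}$ exactly as in your part (c). Your Gronwall argument for part (a) is a correct, self-contained replacement for that citation (note only that the same control $\BWctrl_n$ yields the same disturbance signal $\BWstrategy[\BWctrl_n]$ regardless of the terminal state, which is what makes the comparison of the two trajectories legitimate). More interestingly, your part (b) takes seriously something the paper glosses over: negating Definition \ref{def:frt} gives ``$\exists\,\BWctrl$ keeping $\BWtraj$ in $C^c$ exactly,'' whereas Definition \ref{def:vk} only asks for containment in $C^c+B_\epsilon$ for every $\epsilon$, with the control allowed to depend on $\epsilon$. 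Collapsing that quantifier gap is precisely your Arzel\`a--Ascoli step, and it is not a ``direct'' consequence of the definitions; the paper is implicitly leaning on the leadership-kernel machinery of \cite{cardaliaguet1996differential}, where the $\epsilon$-approximate and exact formulations are reconciled.

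The one genuine weak point is the mechanism you propose for the realizability step, and you have correctly located it as the crux. Weak-$\ast$ compactness of $L^\infty([-T,0];U)$ plus convexity of the velocity sets (which, incidentally, is not among the paper's assumptions) gives you a limit pair $(\BWctrl^*,\dstb^*)$ whose trajectory is $\BWtraj^*$, but it does not give $\dstb^*=\BWstrategy[\BWctrl^*]$: a non-anticipative strategy is an arbitrary map $\BWcfset\rightarrow\BWdfset$ with no continuity in any topology, so nothing forces the limiting disturbance to be the strategy's response to the limiting control. A Filippov selection therefore produces a controlled trajectory of the differential inclusion, not necessarily one compatible with the fixed $\BWstrategy$. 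The clean repair is not chattering but citing the right statement: the results on leadership/discriminating kernels in \cite{cardaliaguet1996differential} establish exactly that the $\epsilon$-uniform formulation of $\vk{\cdot}$ coincides with the complement of the corresponding inevitable tube, via tangential characterizations rather than naive trajectory compactness. With that substitution your argument closes, and part (c) follows as you say.
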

\vspace{-1em}
\begin{proof} 
The complement relationship \eqref{eq:frt_partition} follows directly from Definitions \ref{def:frt} and \ref{def:vk}. That $\vk{C}$ is a closed set is proven in \cite{cardaliaguet1996differential}, and $\frt{C}$ being an open set results from the complement relationship.
\end{proof}


Based on the lemmas above, we present the proof of Theorem~\ref{thm:frt} below.
\vspace{-1em}
\begin{proof} \dLee{(Theorem \ref{thm:frt})
Since $\vk{C^c}$ is the minimal robustly backward invariant set contained in $C^c$, Lemmas~\ref{lemma:positive_negative} and~\ref{lemma:frt_partition} imply that its complement $\frt{C}$ is the maximal robustly forward control invariant set containing $C$.
}
\end{proof}
\vspace{-0.75em}

\subsection{Proof of Corollary \ref{cor:frt_corollary}}
\label{appendix:proof-frt-fixed-point}
\dLee{
By Lemma~\ref{lemma:positive_negative}, a set $S$ satisfying Condition~\ref{assumption:S} is robustly control invariant if and only if the complement of $\interior{S}$ is robustly backward control invariant. By Definition~\ref{def:vk}, this is equivalent to $\vk{\{\interior{S}\}^c} = \{\interior{S}\}^c$. By Lemma~\ref{lemma:frt_partition}, this condition is further equivalent to $\frt{\interior{S}} = \interior{S}$.
}

\subsection{Proof of Theorem \ref{cor:overapproximation}}
\label{appendix:proof-frt-overapproximation}
\jcnote{
We want to prove that $\frt{\interior{C}} \subset S$, which is sufficient to prove $\{\interior{S}\}^c \subset \{\frt{\interior{C}}\}^c$. Since the set $S$ satisfies Condition~\ref{assumption:S} and is robustly control invariant, by Lemma~\ref{lemma:positive_negative}, $\{\interior{S}\}^c$ is robustly backward control invariant. Also, since $C \subset S$, $\{\interior{S}\}^c \subset \{\interior{C}\}^c$. By Lemma~\ref{lemma:frt_partition} $\{\frt{\interior{C}}\}^c = \vk{\{\interior{C}\}^c}$, which is the largest robustly backward invariant in $\{\interior{C}\}^c$. Therefore, $\{\interior{S}\}^c \subset \{\frt{\interior{C}}\}^c$.
}

\subsection{Proof of Lemma \ref{lemma:frt_value}}
\label{appendix:lemma_frt_value}

We define two one-to-one functions: $\rho_\ctrl:\cfset\rightarrow\BWcfset$:
\vspace{-0.5em}
\begin{equation}
\label{eq:one-to-one-ctrl}
    \rho_\ctrl(\ctrl) (-t) = \ctrl(t),\quad \forall t\in[0,\infty); \vspace{-0.5em}
\end{equation}
and $\rho_{\strategy}:\Strategy\rightarrow\BWStrategy$:
\vspace{-0.5em}
\begin{equation}\label{eq:one-to-one-dstb}
    \rho_{\strategy}(\strategy)[\BWctrl] (-t) = \strategy[\rho_\ctrl^{-1}(\BWctrl)](t), \vspace{-0.5em}
\end{equation}
for all $\BWctrl\in\BWcfset$ and $t\in[0,\infty)$.
Then, \vspace{-0.5em}
\begin{equation}
\label{eq:proof_lemma3_2}
    \BWtraj(-t)=\traj(t) \quad \forall t\in[0,\infty), \vspace{-0.5em}
\end{equation}
where $\BWtraj$ solves \eqref{eq:backward_dynamics}, and $\traj$ solves 
\vspace{-0.25em}
\[\dot{\traj}(t)=-f(\traj(t),\rho_\ctrl^{-1}(\BWctrl)(t), \rho_{\strategy}^{-1}(\BWStrategy)[\rho_\ctrl^{-1}(\BWctrl)](t)) \vspace{-0.25em}
\]
for $t>0$, and $\traj(0)=x$.
Since $\rho_\ctrl$ and $\rho_{\strategy}$ are one-to-one, the viability kernel of $\interior{C}^c$ under $-f$, defined in Definition \ref{def:vk}, is equivalent to the following set: \vspace{-0.5em}
\begin{align}
     \vk{C} = & \{ x \in \R^n ~|~\forall \strategy\in\Strategy,\epsilon>0,T>0, \exists \ctrl \in \cfset \text{ s.t.} \nonumber \\
     &  \forall t\in[0,T], \traj(t)\in C + B_\epsilon \}, \vspace{-0.5em}
\label{eq:proof_lemma3_1}
\end{align}
where $\traj$ solves \vspace{-0.5em}
\begin{align}
    \label{eq:forward_dyn_ctrl_dist}
    \resizebox{.85\hsize}{!}{$\dot{\traj}(t)=-f(\traj(t),\ctrl(t), \strategy[\ctrl](t)), \;\forall t>0,\;\; \traj(0)=x.$} \vspace{-0.5em}
\end{align}
By \cite[Lemma 1]{xue2018reach}, the viability kernel \eqref{eq:proof_lemma3_1} of $\interior{C}^c$ is characterized by a particular value function:
\begin{align*}
     \resizebox{\hsize}{!}{$\displaystyle \vk{\interior{C}^c}\!=\!\Big\{x~\Big|~\! \inf_{\strategy\in\Strategy}\sup_{\ctrl\in\cfset}\inf_{t\in[0,\infty)} e^{-\gamma t}\big(\!-\!h_C (\traj(t))\big)\!=\!0\Big\},$}
\end{align*}
where $\traj$ solves \eqref{eq:forward_dyn_ctrl_dist}.
By \eqref{eq:proof_lemma3_2} and one-to-one properties of $\rho_\ctrl$ and $\rho_{\strategy}$,~$\vk{\interior{C}^c}=\{x~|~ \VV (x) =0 \}$.
By Lemma~\ref{lemma:frt_partition},
\vspace{-1em}
\begin{align}
    \frt{\interior{C}} =\{x~|~ \VV (x) \neq 0 \}.
    \label{eq:proof_lemma3_4}
\vspace{-0.5em}
\end{align}
Since $h_C$ is bounded, 
\begin{align}
    \VV(x) \geq \sup_{\BWstrategy\in \BWStrategy}\inf_{\BWctrl\in\BWcfset} [e^{\gamma t}h_C(\BWtraj(t)) ~|~t=\-\infty]=0,
    \label{eq:lemma_nonNegative_V}
\end{align}
for all $x\in\R^n$.
By combining \eqref{eq:proof_lemma3_4} and \eqref{eq:lemma_nonNegative_V}, $\frt{\interior{C}}\!=\!\{ x~|~ \VV(x)>0 \}$.

\dLee{This also implies that $\{\frt{\interior{C}}\}^c = \{ x \mid \VV(x) \le 0 \}$. From the definition of the FRT value function in~\eqref{eq:Def_FRT_valueFunction}, we have $\VV(x) \ge h_C(x)$ for all $x$, since $\JJ(x, \BWctrl, \BWstrategy) \ge h_C(e^{\gamma 0}\BWtraj(0)) = \cbf(x)$. Therefore, it follows that $\{\frt{\interior{C}}\}^c = \{ x \mid \VV(x) = 0 \}$.
}

\subsection{Proof of Proposition \ref{prop:lipschitz}}
\label{appendix:prop_lipschitz}

For $x_1\in\R^n$ and $\epsilon>0$, there exists $\hat{\xi}_d^-\in\BWStrategy$ such that
\vspace{-0.5em}
\begin{equation*}
    \VV(x_1)\leq \inf_{\BWctrl\in\BWcfset} \JJ(x_1 ,\BWctrl,\hat{\xi}_d^-) +\epsilon, \vspace{-0.5em}
\end{equation*}
where $\JJ$ is defined in \eqref{eq:frt_cost}.
Hence, 
\vspace{-0.5em}
\begin{equation*}
    \VV(x_1)\leq \JJ(x_1 ,\BWctrl,\hat{\xi}_d^-) +\epsilon \vspace{-0.5em}
\end{equation*}
for any $\BWctrl\in\BWcfset$. For $x_2$, there exists $\hat{\mathrm{u}}^-\in\BWcfset$ such that \vspace{-0.5em}
\begin{equation*}
    \VV(x_2) \geq \inf_{\BWctrl\in\BWcfset} \JJ(x_2 ,\BWctrl, \hat{\xi}_d^-) \geq \JJ(x_2,\hat{\mathrm{u}}^-, \hat{\xi}_d^- ) -\epsilon. \vspace{-0.5em}
\end{equation*}
By combining the above two inequalities, we have \vspace{-0.25em}
\begin{align*}
    & \VV(x_1)-\VV(x_2) \leq \JJ(x_1 ,\hat{\mathrm{u}}^- ,\hat{\xi}_d^-) - \JJ(x_2 , \hat{\mathrm{u}}^- ,\hat{\xi}_d^-) + 2 \epsilon \\
    & = \sup_{t\in(-\infty,0]}\!\!e^{\discount t}\cbf (\mathrm{x}_1^- (t)) \; - \!\!\sup_{t\in(-\infty,0]}\!\! e^{\discount t}\cbf (\mathrm{x}_2^- (t)) + 2\epsilon, \vspace{-0.5em} \notag 
\end{align*}
where $\mathrm{x}_1^-$ solves \eqref{eq:backward_dynamics} for $(\hat{\mathrm{u}}^-,\hat{\xi}_d^- )$ with the terminal state $x_1$, and $\mathrm{x}_2^-$ solves \eqref{eq:backward_dynamics} for $(\hat{\mathrm{u}}^-,\hat{\xi}_d^- )$ with the terminal state $x_2$.
Since there exists $\hat t \in (-\infty, 0]$ such that \vspace{-0.5em}
\begin{equation*}
    \sup_{t\in(-\infty,0]}e^{\discount t}\cbf (\mathrm{x}_1^- (t)) \leq e^{\discount \hat t}\cbf (\mathrm{x}_1^- (\hat t)) +\epsilon, \vspace{-0.5em}
\end{equation*}
this implies \vspace{-0.5em}
\begin{align*}
    & \VV(x_1)-\VV(x_2) \leq e^{\discount \hat t}\cbf (\mathrm{x}_1^- (\hat t)) - e^{\discount \hat t}\cbf (\mathrm{x}_2^- (\hat t)) +3\epsilon \\
    & \leq L_{h_S}e^{\discount \hat t} e^{-L_f \hat t} \lVert x_1-x_2\rVert + 3\epsilon \leq L_{h_S}\lVert x_1-x_2\rVert + 3\epsilon, \vspace{-0.5em}
\end{align*}
where $L_{h_S}$ is the Lipschitz constant of $h_S$. The second inequality is a result of Gronwall's inequality, and the third is a result of the condition, $L_f < \gamma$. Using a similar argument, we can show $\VV(x_2)-\VV(x_1)\leq L_{h_S}\lVert x_1-x_2\rVert + 3\epsilon$, thus $|\VV(x_1)-\VV(x_2)|\leq L_{h_S}\lVert x_1-x_2\rVert+3\epsilon$. Since the previous inequality holds for all $\epsilon>0$, $|\VV(x_1)-\VV(x_2)|\leq L_{h_S}\lVert x_1-x_2\rVert$.

\subsection{Proof of Theorem \ref{thm:HJPDE}}
\label{appendix:thm_HJPDE}

Using the one-to-one mappings $\rho_\ctrl$ and $\rho_{\strategy}$ in \eqref{eq:one-to-one-ctrl} and \eqref{eq:one-to-one-dstb}, the value function $\VV$ in \eqref{eq:Def_FRT_valueFunction} can be written as
\begin{align}
    \VV(x) = \sup_{\strategy\in\Strategy}\inf_{\ctrl \in \cfset} \sup_{t\in[0,\infty)} e^{-\gamma t}h_C(\traj(t)),
\label{eq:v_differeent_def}
\end{align}
where $\traj$ solves the negated dynamics, \eqref{eq:forward_dyn_ctrl_dist}. Then, by adopting the results in \cite{xue2018reach}, we can first present Bellman’s principle of optimality of $\VV$:
\begin{theorem}\label{thm:DP}
(Dynamic Programming principle \cite[Lemma 3]{xue2018reach}).
Suppose $\discount>0$. For $x\in\R^n$,
\begin{align}\VV(x)\!=\!\!\sup_{\BWstrategy\in\BWStrategy}\inf_{\BWctrl\in\BWcfset}\!\!\max\!\Big\{ & \max_{t\in[-T,0]} e^{\discount t}h_C\left(\BWtraj(t)\right), \label{eq:eq_thm_DP} \\
& e^{-\discount T} \VV\left(\BWtraj(-T)\right) \!\Big\} \nonumber
\end{align}
for any $T>0$, where $\BWtraj$ solves \eqref{eq:backward_dynamics}.
\end{theorem}

Next, Theorem~\ref{thm:HJPDE} holds by \cite[Lemma 3]{xue2018reach}, since $\VV$ rewritten as \eqref{eq:v_differeent_def} is the unique viscosity solution to 
\begin{align*}
    0=\min\big\{ & \VV(x)- h_C(x), \\
    & -\min_{u\in U}\max_{d\in D} \frac{\partial \VV}{\partial x}\cdot (-f(x,u,d)) + \gamma \VV(x) \big\}
\end{align*}
in $\R^n$, which is equivalent to \eqref{eq:thm_HJPDE}.

\subsection{Proof of Theorem \ref{thm:contractionMapping}}
\label{appendix:contractionMapping}

Define $l(\strategy,\ctrl,x) \coloneqq \max_{t\in[-T,0]} e^{\gamma t}h_S(\traj(t))$, and \vspace{-0.5em}
\begin{equation*}
    l^i(\strategy,\ctrl,x) \coloneqq e^{-\gamma T} V^i(\traj(-T)), \vspace{-0.5em}
\end{equation*}
for $i=1,2$. Then, \vspace{-0.5em}
\begin{equation*}
    B_T[V^i] = \sup_{\strategy\in\Strategy}\inf_{\ctrl \in \cfset}\max\{ l(\strategy,\ctrl),l^i(\strategy,\ctrl) \}. \vspace{-0.5em}
\end{equation*}
Without loss of generality, let $B_T[V^1](x) \geq B_T[V^2](x)$. For any $\epsilon>0$, $\exists \bar\strategy$ such that $B_T[V^1] -\epsilon < \inf_\ctrl$ $\max\{ l(\bar\strategy,\ctrl),l^1(\bar\strategy,\ctrl) \} $, and $\exists \bar\ctrl$ such that $\inf_\ctrl\max\{$ $ l(\bar\strategy,\ctrl),l^2(\bar\strategy,\ctrl) \}\!+\!\epsilon \!>\!\max\{ l(\bar\strategy,\bar\ctrl),l^2(\bar\strategy,\bar\ctrl)\}$.
Then, \vspace{-0.5em}
\begin{align*}
    & B_T[V^1](x) - B_T[V^2](x) \\
    & \; < 2\epsilon + \max\{ l(\bar\strategy,\bar\ctrl),l^1(\bar\strategy,\bar\ctrl) \} - \max\{ l(\bar\strategy,\bar\ctrl),l^2(\bar\strategy,\bar\ctrl) \} \notag\\
    & \;  \leq 2\epsilon + \lvert l^1(\bar\strategy,\bar\ctrl)-l^2(\bar\strategy,\bar\ctrl) \rvert \notag\\
    &  \; \leq 2\epsilon + e^{-\gamma T}\max_{x\in\R^n}\lvert V^1(x)-V^2(x) \rvert \vspace{-1em}
\end{align*}
The second inequality holds since, for all $a,b,c\in\R$, $|\max\{a,b\}-\max\{a,c\}|\leq|b-c|$. Since the above inequality holds for all $x\in\R^n$ and $\epsilon>0$,
\vspace{-0.5em}
\begin{equation*}
    \lVert B_T[V^1] - B_T[V^2] \rVert_{L^\infty(\R^n)} \leq e^{-\gamma T} \lVert V^1 - V^2 \rVert_{L^\infty(\R^n)} \vspace{-0.5em}
\end{equation*}
Since $\VV$ is a fixed-point solution for all $T>0$, the Banach's contraction mapping theorem \cite[Chapter 9.2]{evans2010partial} implies that $\VV$ is the unique fixed-point solution to $B_T[\VV](x) = \VV(x)$ for all $T>0$. In addition, we have \vspace{-0.5em}
\begin{equation*}
    \lVert B_T[V] - \VV \rVert_{L^\infty(\R^n)} \leq e^{-\gamma T} \lVert V - \VV \rVert_{L^\infty(\R^n)} \vspace{-0.5em}
\end{equation*}
for all $V\!\in\!\textnormal{BUC}(\R^n)$, thus, $\lim_{T\rightarrow \infty} B_T[V] = \VV$.

\subsection{Computation methods for $\VV$}
\label{appendix:contraction}
Theorem~\ref{thm:contractionMapping} allows the computation of $\VV$ with a finite-horizon HJ-PDE, which is guaranteed to converge to $\VV$ as $T \rightarrow \infty$.
Also, the theorem enables other numerical schemes based on time discretization, such as value iteration. 

First, the following lemma presents a finite-horizon HJ equation for the computation of $\VV$.
\begin{lemma}[Finite horizon HJ-PDE for the computation of $\VV$]
\label{lemma:numerical_finite_HJPDE}
For a given initial value function candidate $V^0\in\textnormal{BUC}(\R^n)$, let $W:[0,T]\times\R^n\rightarrow\R$ be the unique viscosity solution to the following initial-value HJ-PDE

\;
\vspace{-2em}
    \begin{align}
        &W(0,x)=\max\{h_S(x) , V^0(x)\}, \text{ for } x\in \R^n, \label{eq:numerical_finite_HJPDE_initialCondition}\\
        &0=\min\bigg\{ W(t,x)- h_S(x), \label{eq:numerical_finite_HJPDE} \\
        & \quad \quad \frac{\partial W}{\partial t} +  \max_u \min_d \frac{\partial W}{\partial x}\!\cdot\!f(x,u,d) + \discount W(t,x) \bigg\}  \nonumber
    \end{align}
    for $(t,x)\in(0,T)\times\R^n$. Then, $W(T,x) \equiv B_T[V^0](x) $.
\end{lemma}

\begin{proof}
We will derive the HJ equation for another value function $W^+$ defined below, and then replace $W^+$ by $W$.
Define $W^+:[-T,0]\times\R^n\rightarrow\R$
\vspace{-0.5em}
\begin{align}
    W^+(t,x)=\inf_{\strategy\in\Strategy}\sup_{\ctrl \in \cfset}\min\bigg\{ & \min_{s\in[t,0]}e^{-\gamma(s-t)}(-h_S(\traj(s))), \nonumber \\
    & e^{\gamma t}(-V_0(\traj(0)))\bigg\},
\label{eq:proof_Lemma_numerical_finiteHJPDE_eq1}
\end{align}
where $\traj$ solves \eqref{eq:forward_dyn_ctrl_dist}.
Then, $W(T,x)=-W^+(-T,x)$, and 
\begin{align}
\begin{split}
    &\resizebox{0.87\hsize}{!}{$\displaystyle W(t,x)=-W^+(-t,x), \frac{\partial W}{\partial t}(t,x) = \frac{\partial W^+}{\partial t}(-t,x),$} \\
    &\resizebox{0.87\hsize}{!}{$\displaystyle\frac{\partial W}{\partial x}(t,x) = -\frac{\partial W^+}{\partial x}(-t,x), \forall (t,x)\in(0,T)\times\R^n.$}
\end{split}\label{eq:proof_Lemma_numerical_finiteHJPDE_eq22}
\vspace{-1em}
\end{align}
We adopt the results in \cite{choi2021robust} for the rest of the proof. The value function in \cite{choi2021robust} is \vspace{-0.5em}
\begin{equation*}
\inf_{\strategy\in\strategy}\sup_{\ctrl}\min_{s\in[t,0]}e^{-\gamma(s-t)}(-h_S(\traj(s))), \vspace{-0.5em}
\end{equation*}
which is the exactly same as \eqref{eq:proof_Lemma_numerical_finiteHJPDE_eq1} except the second term of the minimization operation: $e^{\gamma t}(-V_0((\traj(0)))$. This term affects the terminal condition of $W^+$ but not the dynamic programming principle. 
Thus, $W^+$ and the value function in \cite{choi2021robust} solves the same dynamic programming principle, but their terminal conditions are different. Note that we assume $\gamma>0$, but \cite{choi2021robust} assumes $\gamma\leq0$. However, the sign of $\gamma$ does not affect any arguments in \cite{choi2021robust}'s lemmas, theorems.

Using similar arguments as in the proof of \cite[Theorem 2]{choi2021robust}, we get \vspace{-0.5em}
\begin{align*}
    W^+(t,x)\!=\! \inf_{\strategy\in\Strategy}\sup_{\ctrl \in \cfset} \min\bigg\{& \!\min_{s\in[t,t+\delta]}e^{-\gamma(s-t)}(-h_S(\traj(s))), \\
    & e^{-\gamma\delta} W^+(t+\delta, \traj(t+\delta) ) \bigg\}. \vspace{-0.5em}
\end{align*}
Then, \cite[Theorem 3]{choi2021robust} implies that $W^+$ is the unique viscosity solution to the terminal-value HJ equation:\vspace{-1.5em}

{\small
\begin{align*}
    &W^+(0,x)=-\max\{h_S(x),V_0(x)\} \quad \text{on } \{t=0\}\times\R^n,\\
    &0=\min\Big\{ -\!h_S(x)\!-\!W^+(t,x), \\ & \quad \quad \quad \frac{\partial W^+}{\partial t} +\max_{u}\min_d \frac{\partial W^+}{\partial x}\cdot (-f)(x,u,d)-\gamma W^+(t,x) \Big\}, \vspace{-1em}
\end{align*}
}

\vspace{-1em}
\noindent in $(-T,0)\times\R^n$. By applying \eqref{eq:proof_Lemma_numerical_finiteHJPDE_eq22}, we get the conclusion that $W$ is the unique viscosity solution to \eqref{eq:numerical_finite_HJPDE}.
\end{proof}

In Lemma~\ref{lemma:numerical_finite_HJPDE}, any $V^0\in\textnormal{BUC}(\R^n)$ works for the computation of $\VV$; for instance, a straightforward choice of $V^0$ can be $h_S$. As $T\rightarrow\infty$, $\frac{\partial W}{\partial t}$ vanishes to 0 for all $x\in\R^n$.

Combining Theorem~\ref{thm:contractionMapping} and Lemma~\ref{lemma:numerical_finite_HJPDE}, we have \vspace{-0.5em}
\begin{align}
    \lim_{T\rightarrow \infty}B_T[V^0]=\lim_{T\rightarrow \infty} W(T,x)=V_\discount(x).
\end{align}
The PDE \eqref{eq:numerical_finite_HJPDE} can be numerically solved forward in time from the initial condition \eqref{eq:numerical_finite_HJPDE_initialCondition}, by using well-established time-dependent level-set methods \cite{Mitchell2005b}. 

Theorem~\ref{thm:contractionMapping} also enables other numerical schemes that are based on time discretization, like value iteration, to produce an accurate solution of $\VV$. The following corollary of Theorem~\ref{thm:contractionMapping} provides the guarantee that the value iteration with any initial guess of $V^0 \in\textnormal{BUC}(\R^n)$ will converge to $\VV$ with a Q-linear convergence rate specified by \eqref{eq:lemma_contraction_eq3}. For a given time step size $\Delta t$, the semi-Lagrangian approximation can be applied to the exact Bellman backup operator in \eqref{eq:frt-bellman-backup} for its numerical approximation:
\vspace{-5mm}
\small
\dLee{
\begin{align*}
    V^{k+1}(x)=\max\{\cbf(x),\min_{u\in U}\max_{d\in D}e^{-\gamma \Delta t}V^k (x-\Delta t f(x,u,d))\},
\end{align*}
}
\vspace{-3mm}
\normalsize
\\\noindent
and the resulting value function will converge to $\VV$ when $\Delta t\rightarrow 0$ \cite{akametalu2018minimum}.

\begin{corollary}[Value Iteration]
    For any $V^0 \in \text{BUC}(\R^n)$ and a time step $\Delta t>0$, define the sequence $\{V^k\}_{k=0}^{\infty}$ by an iteration $V^k := B_{\Delta t}[V^{k-1}]$ for $k\in \N$. Then,
    \begin{equation}
        \frac{\lVert V^{k+1} - V_\discount \rVert_\infty}{\lVert V^{k} - V_\discount \rVert_\infty}  \le e^{-\discount \Delta t} < 1,
        \label{eq:lemma_contraction_eq3}        
    \end{equation}
    and thus, $\lim_{k\rightarrow \infty} V^k = \VV$.
\end{corollary}

\begin{proof} This is a direct outcome of Theorem~\ref{thm:contractionMapping}.
\end{proof}

\subsection{Proof of Proposition \ref{thm:opt-policy-frt}}
\label{appendix:opt-policy-frt}

(i) At $x\in\frt{\interior{C}}$ where $\VV$ is differentiable, the HJ-FRT-VI \eqref{eq:thm_HJPDE} implies that $K_\gamma$ is non-empty. 

\noindent
(ii) For any control policy $\pi=\pi(x)\in K_\gamma(x)$, where $\VV$ is differentiable, consider the following equation for $V^\pi_\gamma$: \vspace{-0.5em}
\begin{equation}
\label{eq:proof_prop3_eq1}
\resizebox{0.87\hsize}{!}{$\displaystyle 0\!=\!\min\Big\{\!\
        V^\pi_\gamma(x)- h_C(x),  \min_d \frac{\partial \VV}{\partial x}\cdot f(x,\pi(x),d) + \discount V^\pi_\gamma(x) \Big\}.$} \vspace{-0.5em}
\end{equation}
For each $x\in\R^n$, $\min\{y-h_C(x), \min_d \frac{\partial \VV}{\partial x}\cdot f(x,\pi(x),d) + \discount y\}$ is monotonically increasing in $y$, so the equation \eqref{eq:proof_prop3_eq1} has a unique solution. 
Also, from the HJ-FRT-VI \eqref{eq:thm_HJPDE}, \vspace{-0.5em}
\begin{align}
    & \resizebox{0.9\hsize}{!}{$\displaystyle 0\!=\!\min\!\Big\{\!\
        \VV(x)\!-\!h_C(x),  \max_u \min_d \frac{\partial \VV}{\partial x}\cdot f(x,u,d) + \discount V_\gamma(x) \Big\} \geq$} \notag\\
        & \resizebox{0.87\hsize}{!}{$\displaystyle \min\!\Big\{ \!\
        \VV(x)\!-\!h_C(x),  \min_d \frac{\partial \VV}{\partial x}\cdot f(x,\pi(x),d) + \discount V_\gamma(x) \Big\}\!\geq\!0.$} \vspace{-1em} \label{eq:proof_prop3_eq2}
\end{align}
The last inequality holds since $V_\gamma -h_C \geq 0$ from \eqref{eq:thm_HJPDE} and $\min_d \frac{\partial \VV}{\partial x}\cdot f(x,\pi(x),d) + \discount V_\gamma(x) \geq0$ since $\pi(x)\in K_\gamma(x)$. Equation \eqref{eq:proof_prop3_eq2} and the uniqueness of \eqref{eq:proof_prop3_eq1} imply $\VV\equiv V^\pi_\gamma$ for any $\pi$. By replacing $\VV$ by $V^\pi_\gamma$ in \eqref{eq:proof_prop3_eq1}, \vspace{-0.5em}
\begin{equation*}
\resizebox{\hsize}{!}{$\displaystyle 0\!=\!\min\Big\{\!\
    V^\pi_\gamma(x)\!-\!h_C(x),  \min_d \frac{\partial V^\pi_\gamma}{\partial x}\cdot f(x,\pi(x),d) + \discount V^\pi_\gamma(x) \Big\}$.} \vspace{-0.5em}
\end{equation*}
The solution to the above PDE can be considered as the value function \eqref{eq:frt_cost} under $\pi(x)$ and worst-case disturbance, and since $\VV\equiv V^\pi_\gamma$, we conclude that any control $u\in K_\gamma(x)$ is an optimal control for the zero-sum game value $\VV$ in \eqref{eq:Def_FRT_valueFunction}.

\subsection{Proof of Theorem \ref{thm:cbf_is_frt}}
\label{appendix:cbf_is_frt}

We will show the two statements as follows.
\begin{enumerate}
    \item $\forall v\in C^\infty(\R^n)$ such that $\VV-v$ has a local minimum at $x_0\in\R^n$ and $\VV(x_0)=v(x_0)$,
    \vspace{-0.5em}
\begin{equation*}
    \!\!\!\!\!\!\!\!\!\!\!\!\!\resizebox{\hsize}{!}{$\displaystyle 0\leq \min\Big\{ v(x_0)-\cbf(x_0), \max_{u\in U}\min_{d\in D} \frac{\partial v}{\partial x}(x_0)\cdot f(x_0,u,d) + \gamma v(x_0)\Big\}.$} \vspace{-0.5em}
\end{equation*}
\item $\forall v\in C^\infty(\R^n)$ such that $\VV-v$ has a local maximum at $x_0\in\R^n$ and $\VV(x_0)=v(x_0)$, \vspace{-0.5em}
\begin{equation*}
    \!\!\!\!\!\!\!\!\!\!\!\!\!\resizebox{\hsize}{!}{$\displaystyle 0\geq \min\Big\{v(x_0)-\cbf(x_0), \max_{u\in U}\min_{d\in D} \frac{\partial v}{\partial x}(x_0)\cdot f(x_0,u,d) + \gamma v(x_0)\Big\}.$} \vspace{-0.5em}
\end{equation*}
\end{enumerate}
    
    \textit{Case 1.} $\VV(x_0)\!=\!\cbf(x_0)\!>\!0$: By the continuity of $\cbf$, there exists $\epsilon>0$ such that $\VV(y)\!=\!\cbf(y)$ for all $y\!\in\!B_\epsilon(x_0)$. Thus, the gradient of $\VV$ at $x_0$ exists: $\frac{\partial \VV}{\partial x}(x_0)=\frac{\partial \cbf}{\partial x}(x_0)$, so for any $v$ such that $\VV-v$ has either a local minimum or a local maximum at $x_0$, $\frac{\partial v}{\partial x}(x_0)=\frac{\partial \cbf}{\partial x}(x_0)$. From \eqref{eq:CBF_ineqality}, \vspace{-0.5em}
    \begin{equation} \label{eq:cbf-equality} 
        \max_{u\in U}\min_{d\in D} \frac{\partial v}{\partial x}(x_0)\cdot f(x_0,u,d) + \gamma v(x_0) =0.
    \vspace{-0.5em}
    \end{equation}
    Therefore, Statements 1) and 2) hold in this case.
    
    \textit{Case 2.} $\VV(x_0)\!=\!0\!>\!\cbf(x_0)$: By the continuity of $\cbf$, there exists $\epsilon>0$ such that $\VV(y)=0$ for all $y\in B_\epsilon(x_0)$. This implies that the gradient of $\VV$ at $x_0$ is $0\in\R^n$, so for any $v$ such that $\VV-v$ has either a local minimum or a local maximum at $x_0$, $\frac{\partial v}{\partial x}(x_0)=0$. Thus, \eqref{eq:cbf-equality} holds.
    Therefore, Statements 1) and 2) hold in this case.
    
    \textit{Case 3.} $\VV(x_0)\!=\!0\!=\!\cbf(x_0)$: From $v(x_0) - \cbf(x_0)\!=\!0$, it is trivial that 2) holds, and we focus on the proof of 1). Since $\VV-v$ has a local minimum at $x_0$, 
    $\frac{\partial v}{\partial x}(x_0)\in \partial^- \VV(x_0)$,
    where $\partial^- \VV(x_0)$ is the sub-differential, which is determined as 
        $\partial^- \VV(x_0) = \text{conv}\left(\{0\}\cup\Big\{\frac{\partial \cbf}{\partial x}(x_0)\Big\}\right)$,
    where $\text{conv}$ is a convex-hull operator. Thus, 
    $\frac{\partial v}{\partial x}(x_0) = \alpha \frac{\partial \cbf}{\partial x}(x_0)$
    for some $\alpha\in[0,1]$. Thus, from \eqref{eq:CBF_ineqality} and $v(x_0) = 0$,
    \eqref{eq:cbf-equality} holds and therefore, 1) holds.

\section*{Acknowledgments}
We thank Professors Helena Frankowska and Lawrence C. Evans for insightful discussions.

\balance

\bibliographystyle{plain}        
\bibliography{references}  
\end{document}